\title{Partitioning a triangle-free planar graph into a forest and a
  forest of bounded degree}
\author[a]{François Dross}
\author[a]{Mickael Montassier}
\author[a,b]{Alexandre Pinlou}
\affil[a]{{\small Université de Montpellier, CNRS, LIRMM}} 
\affil[b]{{\small Université Paul-Valéry Montpellier 3, Département MIAp\medskip}} \affil[ ]{{\small 161 rue Ada, 34095 Montpellier
    Cedex 5, France}} \affil[
]{\href{mailto:francois.dross@lirmm.fr,mickael.montassier@lirmm.fr,alexandre.pinlou@lirmm.fr}{\small{\{francois.dross,mickael.montassier,alexandre.pinlou\}@lirmm.fr}}}
\begin{document}

\maketitle
\newtheorem{theo}{Theorem}
\newtheorem*{theo*}{Theorem}
\newtheorem{cor}[theo]{Corollary}
\newtheorem{lemm}[theo]{Lemma}
\newtheorem{prop}[theo]{Property}
\newtheorem{obs}[theo]{Observation}
\newtheorem{conj}[theo]{Conjecture}
\newtheorem{claim}[theo]{Claim}
\newtheorem{config}[theo]{Configuration}
\newtheorem{quest}[theo]{Question}
\begin{abstract}

  An $({\cal F},{\cal F}_d)$-partition of a
  graph is a vertex-partition into two sets $F$ and $F_d$ such
  that the graph induced by $F$ is a forest and the one induced by
  $F_d$ is a forest with maximum degree at most $d$. We prove that
  every triangle-free planar graph admits an
  $({\cal F},{\cal F}_5)$-partition. Moreover we show that if for some
  integer $d$ there exists a triangle-free planar graph that does not
  admit an $({\cal F},{\cal F}_d)$-partition, then it is an
  NP-complete problem to decide whether a triangle-free planar graph admits
  such a partition.
\end{abstract} 

\section{Introduction}
We only consider finite simple graphs, with neither
loops nor multi-edges. Planar graphs we consider are supposed to be
embedded in the plane.
Consider $i$ classes of graphs ${\cal G}_1,\ldots, {\cal G}_i$. A
$({\cal G}_1,\ldots, {\cal G}_i)$-partition of a graph $G$ is a
vertex-partition into $i$ sets $V_1,\ldots,V_i$ such that, for all
$1\le j \le i$, the graph $G[V_j]$ induced by $V_j$ belongs to ${\cal
  G}_j$. In the following we will consider the following classes of
graphs:
\begin{itemize}
\item ${\cal F}$ the class of forests,
\item ${\cal F}_d$ the class of forests with maximum degree at most
  $d$,
\item ${\cal D}_d$ the class of $d$-degenerate graphs (recall that a
  \emph{$d$-degenerate graph} is a graph such that all subgraphs have a
  vertex of degree at most $d$),
\item ${\Delta}_d$ the class of graphs with maximum degree at most $d$,
\item ${\cal I}$ the class of empty graphs (i.e. graphs with no edges).
\end{itemize}
For example, an $({\cal I},{\cal F},{\cal D}_2)$-partition of $G$ is a
vertex-partition into three sets $V_1,V_2,V_3$ such that $G[V_1]$ is
an empty graph, $G[V_2]$ is a forest, and $G[V_3]$ is a 2-degenerate
graph.
%

The Four Colour Theorem \cite{appel1,appel2} states that every planar
graph $G$ admits a proper $4$-colouring, that is that $G$ can be
partitioned into four empty graphs, i.e. $G$ has an
$({\cal I}, {\cal I}, {\cal I}, {\cal
  I})$-partition.
Borodin~\cite{Borodin} proved that every planar graph admits an
acyclic colouring with at most five colours (an acyclic colouring is a
proper colouring in which every two colour classes induce a
forest). This implies that every planar graph admits an
$({\cal I}, {\cal F}, {\cal F})$-partition. Poh~\cite{Poh} proved that
every planar graph admits an
$({\cal F}_2, {\cal F}_2, {\cal F}_2)$-partition. Thomassen proved
that every planar graph admits an
$({\cal F}, {\cal D}_2)$-partition~\cite{thomassen1995decomposing},
and an
$({\cal I}, {\cal
  D}_3)$-partition~\cite{thomassen2001decomposing}.
However, there are planar graphs that do not admit any
$({\cal F}, {\cal F})$-partition~\cite{chartrand1969point}. Borodin
and Glebov~\cite{borodin2001partition} proved that every planar graph
of girth at least $5$ (that is every planar graph with no triangles
nor cycles of length $4$) admits an $({\cal I}, {\cal F})$-partition.

We focus on triangle-free planar graphs. Raspaud and Wang~\cite{raspaud2008vertex} proved that every planar graph with no triangles at distance at most $2$ (and thus in particular every triangle-free planar graph) admits an $({\cal F}, {\cal F})$-partition. However, it is not known whether every triangle-free planar graph admits an $({\cal I}, {\cal F})$-partition. We pose the following questions: 

\begin{quest} \label{q1}
  Does every triangle-free planar graph admit an $({\cal I}, {\cal F})$-partition?
\end{quest}

\begin{quest} \label{q2}
  More generally, what is the lowest $d$ such that every triangle-free planar graph admits an $({\cal F}, {\cal F}_d)$-partition?
\end{quest}

Note that proving $d= 0$ in Question~\ref{q2} would prove Question~\ref{q1}.
The main result of this paper is the following:
\begin{theo} \label{main} 
  Every triangle-free planar graph admits an $({\cal F},{\cal F}_5)$-partition.
\end{theo}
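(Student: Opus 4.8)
The plan is to prove this by the discharging method, applied to a hypothetical minimum counterexample.
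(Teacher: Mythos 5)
Your proposal names the correct general strategy --- the paper's proof is indeed a discharging argument applied to a minimum counterexample $G$ --- but it contains none of the actual content, and all of the difficulty lives in that content. A discharging proof requires two substantial ingredients that you have not supplied: (i) a collection of \emph{reducible configurations}, i.e.\ local structures that cannot occur in a minimum counterexample because any $({\cal F},{\cal F}_5)$-partition of a smaller graph could be extended or modified to handle them; and (ii) an explicit assignment of initial charges together with discharging rules whose final charges are everywhere non-negative, contradicting the negative total charge $\sum (d(v)-4)+\sum(\ell(f)-4)=-8$ given by Euler's formula. Neither step is routine here. The paper needs to rule out $2^-$-vertices, $3$-vertices with no $8^+$-neighbour, $4$- and $5$-vertices with only $3$-neighbours, and three rather intricate configurations built around $4$-faces (its Lemmas~\ref{degge3}--\ref{config4}); the reducibility arguments for the latter involve deleting two vertices, adding a carefully chosen edge between second neighbours (using planarity and triangle-freeness to guarantee the modified graph is still triangle-free planar), and a lengthy case analysis on where the boundary vertices land in the partition of the smaller graph. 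The discharging procedure itself runs in five steps, with rules that depend on whether a $4$-vertex sits in a specific named configuration, and the verification that every $3$-vertex ends up with charge at least $0$ is itself a page-long argument that invokes several of the forbidden configurations.

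So while your one-sentence plan is not \emph{wrong}, it is not a proof: you have not identified a single reducible configuration, nor proposed any charge rules, nor indicated why the degree threshold $5$ (rather than, say, $3$ or $4$) is what makes the reducibility arguments go through. The gap is essentially the entire argument. To make progress you would need, at minimum, to exhibit the structural lemmas forcing every $3$-vertex to have a high-degree neighbour and forbidding certain adjacent low-degree vertices around $4$-faces, and then design discharging rules under which big vertices, $4$-faces, and $5^+$-faces can pay for the deficit of the $3$-vertices.
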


This implies that $d \le 5$ in Question~\ref{q2}. Our proof uses the
discharging method. It is constructive and immediately yields an
algorithm for finding an $({\cal F},{\cal F}_5)$-partition of a
triangle-free planar graph in quadratic time.

Note that Montassier and Ochem~\cite{mo15} proved that not every
triangle-free planar graph can be partitioned into two graphs of
bounded degree (which shows that our result is tight in
some sense).

Finally, we show that if for some $d$, there exists a triangle-free planar graph
that does not admit an $({\cal F},{\cal F}_d)$-partition, then deciding whether a triangle-free planar
graph admits such a partition is NP-complete. That is, if the answer
to Question~\ref{q2} is some $k> 0$, then for all $0\le d< k$, deciding whether a triangle-free
planar graph admits an $({\cal F},{\cal F}_d)$-partition is NP-complete. We prove this by
reduction to \textsc{Planar 3-Sat}.

All presented results on vertex-partition of planar graphs are
summarized in Table~\ref{tab:results}. 

\begin{table}
  \centering
  \begin{center}
    \begin{tabular}{|l|l|l|}
      \hline
      Classes & Vertex-partitions & References \\ \hline
      \multirow{5}{*}{Planar graphs} & $({\cal I},{\cal I},{\cal I},{\cal I})$ & The Four
                                                                                 Color Theorem \cite{appel1,appel2}\\
              & $({\cal I},{\cal F},{\cal F})$& Borodin \cite{Borodin} \\
              & $({\cal F}_2,{\cal F}_2,{\cal F}_2)$& Poh \cite{Poh} \\
              & $({\cal F},{\cal D}_2)$& Thomassen \cite{thomassen1995decomposing} \\
              & $({\cal I},{\cal D}_3)$& Thomassen \cite{thomassen2001decomposing} \\ \hline
      \multirow{4}{*}{Planar graphs with girth 4} & $({\cal I},{\cal I},{\cal I})$& Grötzsch
                                                                                    \cite{grotzsch} \\
              & $({\cal F},{\cal F})$ & Folklore \\
              & $({\cal F}_5,{\cal F})$ & Present paper (Theorem~\ref{main}) \\ 
              & $({\cal I},{\cal F})$ & Open question (Question~\ref{q1})\\ \hline 
      \multirow{1}{*}{Planar graphs with girth 5} & $({\cal I},{\cal F})$& Borodin and
                                                                           Glebov \cite{borodin2001partition} \\ \hline
    \end{tabular}
  \end{center}

  \caption{Known results.}
  \label{tab:results}
\end{table}
\medskip

Theorem~\ref{main} will be proved in Section~\ref{proofmain}.
Section~\ref{complexity} is devoted to complexity results.

\subsection*{Notation} \label{not}
Let $G=(V,E)$ be a plane graph (i.e. planar graph together with its
embedding).

For a set $S \subset V$, let $G - S$ be the graph constructed from $G$ by removing the vertices of $S$ and all the edges incident to some vertex of $S$. If $x \in V$, then we denote $G - \{x\}$ by $G - x$. For a set $S$ of vertices such that $S \cap V = \emptyset$, let $G + S$ be the graph constructed from $G$ by adding the vertices of $S$. If $x \notin V$, then we denote $G + \{x\}$ by $G + x$. For a set $E'$ of pairs of vertices of $G$ such that $E' \cap E = \emptyset$, let $G + E'$ be the graph constructed from $G$ by adding the edges of $E'$. If $e$ is a pair of vertices of $G$ and $e \notin E$, then we denote $G + \{e\}$ by $G + e$. For a set $W \subset V$, we denote by $G[W]$ the subgraph of $G$ induced by $W$.

We call a vertex of degree $k$, at least $k$ and at most $k$, a
\emph{$k$-vertex}, a \emph{$k^+$-vertex} and a \emph{$k^-$-vertex}
respectively, and by extension, for any fixed vertex $v$, we call a
neighbour of $v$ of degree $k$, at least $k$ and at most $k$, a
\emph{$k$-neighbour}, a \emph{$k^+$-neighbour}, and a
\emph{$k^-$-neighbour} of $v$ respectively. When there is some
ambiguity on the graph, we call a neighbour of $v$ in $G$ a
\emph{$G$-neighbour} of $v$. We call a cycle of length $\ell$, at least
$\ell$ and at most $\ell$ a \emph{$\ell$-cycle}, a \emph{$\ell^+$-cycle}, and a
\emph{$\ell^-$-cycle} respectively, and by extension a face of length
$\ell$, at least $\ell$ and at most $\ell$ a \emph{$\ell$-face}, a
\emph{$\ell^+$-face}, and a \emph{$\ell^-$-face} respectively. We say that a
vertex of $G$ is \emph{big} if it is a $8^+$-vertex, and \emph{small}
otherwise. By extension, a big neighbour of a vertex $v$ is a
$8^+$-neighbour of $v$, and a small neighbour of $v$ is a
$7^-$-neighbour of $v$.

Two neighbours $u$ and $w$ of a vertex $v$ are \emph{consecutive} if $uvw$ forms a path on the boundary of a face.








\section{Proof of Theorem \ref{main}} \label{proofmain}

We prove Theorem~\ref{main} by contradiction. Let $G = (V,E)$ be a counter-example to Theorem~\ref{main} of minimum order.

Graph $G$ is connected, otherwise at least one of its connected components would be a counter-example to Theorem~\ref{main}, contradicting the minimality of $G$.

Let us consider any plane embedding of $G$.
Let us prove a series of lemmas on the structure of $G$, that correspond to forbidden configurations in $G$.

\begin{lemm} \label{degge3}
  There are no $2^-$-vertices in $G$.
\end{lemm}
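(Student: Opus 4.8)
The plan is to prove Lemma~\ref{degge3} by a standard minimality argument: assume for contradiction that $G$ contains a $2^-$-vertex $v$, delete it, apply minimality of $G$ to the smaller graph, and then show that the resulting $({\cal F},{\cal F}_5)$-partition can be extended back to all of $G$, contradicting the assumption that $G$ is a counter-example.

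First I would let $v$ be a $2^-$-vertex and consider $G' = G - v$. Since $G'$ is a triangle-free planar graph (deleting a vertex preserves both properties) of order strictly less than $G$, by the minimality of $G$ it admits an $({\cal F},{\cal F}_5)$-partition into sets $F'$ and $F'_d$. The goal is then to place $v$ into one of the two parts so that the partition property is maintained on all of $G$. Since $v$ has at most two neighbours in $G$, adding $v$ to either class creates very limited interaction, and the natural choice is to try to put $v$ in the forest part $F'$.

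The key step is the case analysis on $\deg(v)$ and on how $v$'s neighbours are distributed between $F'$ and $F'_d$. If $v$ has degree at most one, it can be added to either part without creating a cycle or raising any degree beyond the allowed bound, so I would place it in $F'$ and be done. If $v$ has degree exactly two with neighbours $x$ and $y$, then adding $v$ to $F'$ could only create a cycle if both $x$ and $y$ lie in $F'$ and are already connected within $G'[F']$; likewise adding $v$ to $F'_d$ is only problematic if it creates a cycle or if one of its neighbours already has degree $5$ in $G[F'_d]$. The crucial observation is that $v$ has only two neighbours, so the constraints are weak: if placing $v$ in $F'$ would close a cycle, then $x$ and $y$ are joined by a path in $F'$, and one checks that placing $v$ in $F'_d$ instead is safe because $v$ contributes degree at most two there and cannot simultaneously close a cycle in $F'_d$ (its two neighbours cannot both be high-degree, connected vertices in $F'_d$ under the constraint that forced us out of $F'$). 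A careful enumeration of these few subcases shows that at least one of the two placements always works.

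The main obstacle I anticipate is handling the degree-two case cleanly, specifically ruling out the simultaneous obstruction in both parts: it is conceivable a priori that adding $v$ to $F'$ closes a cycle while adding $v$ to $F'_d$ either closes a cycle or violates the degree-$5$ bound at a neighbour. The argument must show these cannot all occur at once. Since both neighbours of $v$ are involved and $v$ has only two edges, the obstruction in $F'$ requires both neighbours in $F'$, whereas an obstruction in $F'_d$ requires both neighbours in $F'_d$ (for a cycle) or a saturated neighbour in $F'_d$; because each neighbour lies in exactly one part, these conditions are incompatible, which resolves the lemma. This incompatibility, together with the triviality of the degree $\le 1$ case, is the heart of the proof, and I expect the write-up to consist mainly of making this disjointness precise.
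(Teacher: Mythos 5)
Your proposal is correct and follows essentially the same route as the paper: delete $v$, invoke minimality, and observe that if both neighbours lie in the forest part then $v$ can go into the bounded-degree part as an isolated vertex, while otherwise $v$ has at most one neighbour in the forest part and can be added there without closing a cycle. The incompatibility of the two obstructions that you identify is exactly the paper's case split.
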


\begin{proof}
  Suppose there is a $2^-$-vertex $v$ in $G$. By minimality of $G$, $G-v$ admits an $({\cal F},{\cal F}_5)$-partition $(F,D)$. If $v$ is a $1^-$-vertex, then $G[F \cup\{v\}] \in {\cal F}$. Suppose $v$ is a $2$-vertex. If both of its neighbours are in $F$, then $G[D \cup \{v\}] \in {\cal F}_5$. Otherwise, $G[F \cup\{v\}] \in {\cal F}$. In all cases, one can obtain an $({\cal F},{\cal F}_5)$-partition of $G$, a contradiction.
\end{proof}

\begin{lemm} \label{3-b}
  Every $3$-vertex in $G$ has at least one big neighbour.
\end{lemm}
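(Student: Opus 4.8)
The plan is to argue by minimality, exactly as in the proof of Lemma~\ref{degge3}. Suppose for contradiction that $v$ is a $3$-vertex all of whose neighbours $a$, $b$, $c$ are small, i.e. $7^-$-vertices. By minimality of $G$, the graph $G-v$ admits an $({\cal F},{\cal F}_5)$-partition $(F,D)$, and I want to extend it to $G$ by placing $v$ in one of the two parts, possibly after a local recolouring. Note first that, since $G$ is triangle-free, $a$, $b$, $c$ are pairwise non-adjacent.

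First I would try the two naive extensions. Adding $v$ to $F$ keeps $G[F\cup\{v\}]$ a forest unless at least two neighbours of $v$ lie in $F$ and in a common tree of $G[F]$; adding $v$ to $D$ keeps $G[D\cup\{v\}]$ in ${\cal F}_5$ unless either two neighbours of $v$ lie in a common tree of $G[D]$, or some neighbour of $v$ in $D$ already has degree $5$ in $G[D]$ (the degree of $v$ itself is at most $3\le 5$, so $v$ never violates the degree bound). If either extension succeeds we are done, so assume both of them fail.

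The failure of the $F$-extension forces at least two of $a,b,c$ to lie in $F$; together with $|\{a,b,c\}|=3$ this leaves at most one neighbour of $v$ in $D$. Hence the $D$-extension cannot fail because of a cycle, and it must fail because the unique $D$-neighbour, say $c$, has exactly five neighbours in $G[D]$. This is the one configuration to resolve, and it is where smallness enters: since $c$ is a $7^-$-vertex and $v$ is one of its neighbours, $c$ has at most $7-1-5=1$ neighbour in $F$. Therefore moving $c$ from $D$ to $F$ creates no cycle in $G[F]$, so $F\cup\{c\}$ still induces a forest, while $D\setminus\{c\}$ still induces a forest of maximum degree at most $5$. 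After this move all three neighbours of $v$ lie in $F$, so $v$ has no neighbour in $D\setminus\{c\}$ and can be added there as an isolated vertex, yielding an $({\cal F},{\cal F}_5)$-partition of $G$ and the desired contradiction.

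The only real content is the degree bookkeeping in the last step: the whole argument hinges on the observation that a small $D$-neighbour of $v$ with $D$-degree $5$ necessarily has almost all of its remaining incidences inside $D$, leaving it with at most one neighbour in $F$ and hence safely movable. I expect no genuine obstacle beyond verifying that this single bad configuration is the only one that survives both naive attempts, which is precisely the case analysis sketched above.
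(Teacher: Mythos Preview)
Your proof is correct and follows essentially the same approach as the paper's. The only cosmetic difference is in framing: the paper does a direct case split on how many neighbours of $v$ lie in $D$ (at least two; none; exactly one), whereas you phrase it as ``try both naive extensions and analyze the unique failure configuration,'' but the crucial step---using smallness of the lone $D$-neighbour to bound its $F$-degree by~$1$ and then swapping it into $F$---is identical.
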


\begin{proof}
  Suppose there is a $3$-vertex $v$ in $G$ that has three small neighbours. By minimality of $G$, $G - v$ admits an $({\cal F},{\cal F}_5)$-partition $(F,D)$. If at least two neighbours of $v$ are in $D$, then $G[F \cup\{v\}] \in {\cal F}$. If no neighbour of $v$ is in $D$, then $G[D \cup \{v\}] \in {\cal F}_5$. Suppose exactly one neighbour $u$ of $v$ is in $D$. If at most one of the neighbours of $u$ is in $F$, then $G[F \cup \{u\}] \in {\cal F}$, and $G[D \backslash \{u\} \cup \{v\}] \in {\cal F}_5$. Otherwise, since $u$ is small, at most four of the neighbours of $u$ are in $D$, thus $G[D \cup \{v\}] \in {\cal F}_5$. In all cases, one can obtain an $({\cal F},{\cal F}_5)$-partition of $G$, a contradiction.
\end{proof}

\begin{lemm} \label{4-5star}
  Every $4$-vertex or $5$-vertex in $G$ has at least one $4^+$-neighbour.
\end{lemm}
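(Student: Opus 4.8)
The plan is to argue by contradiction using the minimality of $G$, exactly as in the proofs of Lemmas~\ref{degge3} and~\ref{3-b}. Suppose $v$ is a $4$-vertex or $5$-vertex all of whose neighbours are $3^-$-vertices. By Lemma~\ref{degge3} every vertex of $G$ is a $3^+$-vertex, so in fact every neighbour of $v$ is a $3$-vertex. Write $N(v)=\{u_1,\dots,u_k\}$ with $k\in\{4,5\}$; note that $v$, being a $4$- or $5$-vertex, is small, so by Lemma~\ref{3-b} each $u_i$ has a big neighbour distinct from $v$, and in $G-v$ each $u_i$ has exactly two remaining neighbours. By minimality, $G-v$ admits an $(\mathcal{F},\mathcal{F}_5)$-partition $(F,D)$, and the goal is to insert $v$ into $F$ or $D$ and obtain such a partition of $G$, a contradiction.

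First I would record that the degree constraint is never the obstruction. Since each $u_i$ has degree $2$ in $G-v$, its degree in the forest $D$ is at most $2$, so adding $v$ raises it to at most $3\le 5$; and $v$ itself acquires degree at most $k\le 5$ in $D\cup\{v\}$. Hence $G[D\cup\{v\}]\in\mathcal{F}_5$ as soon as $G[D\cup\{v\}]$ is acyclic, and the only way an insertion can fail is by creating a cycle through $v$, i.e.\ by placing $v$ in a part two of whose members adjacent to $v$ already lie in a common tree. In particular, if at most one neighbour of $v$ lies in $F$ we put $v$ in $F$, and if at most one lies in $D$ we put $v$ in $D$; either way we are done.

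It remains to treat the case where both naive insertions are blocked: $v$ has at least two neighbours in $F$ sharing a tree, and at least two in $D$ sharing a tree. Here I would exploit the asymmetry that relocating a vertex into $F$ carries \emph{no} degree constraint---$F$ need only stay a forest---while $v$ itself is always safely absorbed by $D$. So I would aim to place $v$ in $D$ after first moving one member $u_i$ of a $D$-blocking pair into $F$. As $u_i$ is a $3$-vertex, this move risks only an $F$-cycle, and such a cycle can arise only if the two neighbours of $u_i$ other than $v$ already lie in a single tree of $F$; the big neighbour guaranteed by Lemma~\ref{3-b} helps to rule this out or to select an alternative. The main obstacle is precisely this simultaneous case: one must show that a legal relocation always exists and that, afterwards, the remaining $D$-neighbours of $v$ fall into pairwise distinct trees. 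I expect to handle it by a short case analysis on the distribution $(|F\cap N(v)|,|D\cap N(v)|)\in\{(2,2),(2,3),(3,2)\}$ and on which trees the neighbours of $v$ occupy, relocating one (or, if forced, two) of the $3$-vertex neighbours between the parts and using their scarcity of edges, together with the big neighbour, to certify that neither a forbidden cycle nor an excess degree is ever produced.
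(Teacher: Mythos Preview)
Your plan diverges from the paper's argument and, as written, leaves the decisive step unproven.

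The paper does \emph{not} delete only $v$. It sets $G'=G-\{v,u_0,\dots,u_{k-1}\}$, takes an $(\mathcal F,\mathcal F_5)$-partition $(F,D)$ of $G'$, and then extends in one stroke: put $v$ in $D$; put each $u_i$ in $D$ if both of its two remaining neighbours are in $F$, and in $F$ otherwise. Every $u_i\in D$ then has $v$ as its unique $D$-neighbour, so the $D$-side is the old forest together with a disjoint star centred at $v$; every $u_i\in F$ has at most one $F$-neighbour and the $u_i$'s are pairwise non-adjacent (triangle-freeness), so no $F$-cycle appears. No relocation, no case analysis.

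Your route---delete only $v$, then try to insert it, relocating some $u_i$ if blocked---can be made to work, but not for the reason you give. The big neighbour from Lemma~\ref{3-b} is a red herring here; whether that neighbour lies in $F$ or in $D$ tells you nothing about the move. The observation you actually need is this: if $u_i$ and $u_j$ form a ``$D$-blocking pair'' (i.e.\ they lie in the same component of $G[D]$ in $G-v$), then the $D$-path joining them must leave $u_i$ through one of its two non-$v$ neighbours, so $u_i$ has at least one $D$-neighbour and hence at most one $F$-neighbour; moving $u_i$ to $F$ is therefore always safe. Iterating this (at most twice when $k=5$, since the $u_i$'s are pairwise non-adjacent and each move adds a leaf to $F$) leaves at most one $D$-neighbour of $v$ in each $D$-component, after which $v$ can be placed in $D$. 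That is the missing idea; without it your ``short case analysis on $(|F\cap N(v)|,|D\cap N(v)|)$'' has no leverage, and the sentence ``the big neighbour \dots\ helps to rule this out'' is not a proof.

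In short: the paper's deletion of $v$ together with all its neighbours buys a one-line extension; your smaller deletion forces a relocation argument whose key step you have not supplied and have misattributed.
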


\begin{proof}
  Suppose there is a $4$-vertex or $5$-vertex $v$ in $G$ that has no $4^+$-neighbour. Let the $u_i$ be the neighbours of $v$, for $i \in \{0,...,3\}$ or $i \in \{0,...,4\}$. Let $G' = G - v - \bigcup_i \{u_i\}$. By minimality of $G$, $G'$ admits an $({\cal F},{\cal F}_5)$-partition $(F,D)$. Add $v$ to $D$, and for all $u_i$, add $u_i$ to $D$ if its two neighbours distinct from $v$ are in $F$, and add $u_i$ to $F$ otherwise. Vertex $v$ has at most five neighbours in $D$, and each of the $u_i$ that is in $D$ has one neighbour in $D$. Each of the $u_i$ that is in $F$ has at most one neighbour in $F$. We have an $({\cal F},{\cal F}_5)$-partition of $G$, a contradiction.
\end{proof}

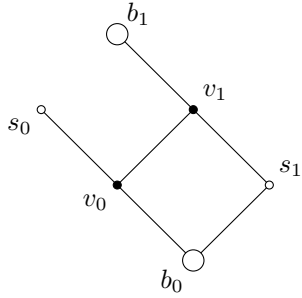
\begin{figure}[h]
  \begin{center}
    \begin{tikzpicture}
      \coordinate (v0) at (0,0) ;
      \coordinate (v1) at (1,1) ;
      \coordinate (b0) at (1,-1) ;
      \coordinate (b1) at (0,2) ;
      \coordinate (s0) at (-1,1) ;
      \coordinate (s1) at (2,0) ;

      \draw (v0) node [below left] {$v_0$} ; 
      \draw (v1) node [above right] {$v_1$} ;
      \draw (b0) node [below left] {$b_0$} ;
      \draw (b1) node [above right] {$b_1$} ;
      \draw (s0) node [below left] {$s_0$} ;
      \draw (s1) node [above right] {$s_1$} ;

      \draw (v0) -- (v1);
      \draw (v0) -- (s0);
      \draw (v0) -- (b0);
      \draw (v1) -- (s1);
      \draw (v1) -- (b1);
      \draw (b0) -- (s1);

      \draw [fill=black] (v0) circle (1.5pt) ; 
      \draw [fill=black] (v1) circle (1.5pt) ;
      \draw [fill=white] (b0) circle (4pt) ;
      \draw [fill=white] (b1) circle (4pt) ;
      \draw [fill=white] (s0) circle (1.5pt) ;
      \draw [fill=white] (s1) circle (1.5pt) ;

    \end{tikzpicture}
  \end{center}
  \caption{The forbidden configuration of Lemma~\ref{config1}. The big vertices are represented with big circles, and the small vertices with small circles. The filled circles represent vertices whose incident edges are all represented.}\label{figconfig1}
\end{figure}

We will need the following observation in the next two lemmas.
\begin{obs} \label{dist3}
  Let $v_0v_1v_2v_3$ be a face of $G$, $u_0$ a neighbour of $v_0$ and $u_1$ a neighbour of $v_1$. Either $u_0$ and $v_2$ are at distance at least $3$, or $u_1$ and $v_3$ are at distance at least $3$.
\end{obs}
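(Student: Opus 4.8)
The plan is to argue by contradiction: suppose that $d(u_0,v_2)\le 2$ and $d(u_1,v_3)\le 2$ simultaneously, and derive a triangle, contradicting that $G$ is triangle-free. Throughout I would use that $v_0v_1v_2v_3$ is a $4$-face, so $v_0v_1,v_1v_2,v_2v_3,v_3v_0\in E$, while the two diagonals are non-edges, $v_0v_2\notin E$ and $v_1v_3\notin E$ (otherwise $v_0v_1v_2$ or $v_1v_2v_3$ would be a triangle). I would also note at the outset that the statement is only meaningful (and only used) when $u_0$ and $u_1$ are the neighbours of $v_0$ and $v_1$ lying outside the face, so I may assume $u_0\notin\{v_1,v_3\}$ and $u_1\notin\{v_0,v_2\}$; combined with the diagonal non-edges this gives $u_0\neq v_2$ and $u_1\neq v_3$, hence $d(u_0,v_2)\ge 1$ and $d(u_1,v_3)\ge 1$. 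From the two distance assumptions I would then extract two short paths in $G$: a path $A=v_0,u_0,[w_0,]v_2$ of length $2$ or $3$ (prepending the edge $v_0u_0$ to a $u_0$--$v_2$ path of length $\le 2$), and symmetrically a path $B=v_1,u_1,[w_1,]v_3$.

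Before invoking planarity I would dispose of the cases where an internal vertex of $A$ or $B$ lands on the $4$-cycle. If the middle vertex $w_0$ of $A$ equals $v_1$ or $v_3$, then $u_0$ is adjacent to $v_0$ and to that cycle-vertex, which together with the cycle edge $v_0v_1$ (resp.\ $v_0v_3$) already produces a triangle; the same holds for $B$. So I may assume the interiors of $A$ and $B$ avoid the cycle entirely. The topological heart of the argument is then planarity: since $v_0v_1v_2v_3$ bounds a face, that face is an empty open disk, and the complementary region $\overline R$ (the closure of the other side of the Jordan curve $C=v_0v_1v_2v_3$) is a closed disk containing all of $G$, with $v_0,v_1,v_2,v_3$ appearing on its boundary circle in this cyclic order. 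The endpoints of $A$ separate the endpoints of $B$ on that circle, so the crosscut $A$ splits $\overline R$ into two parts with $v_1$ and $v_3$ on opposite sides; hence $B$ must meet $A$. Because $G$ is a plane graph, two paths can only meet at a common vertex, so $A$ and $B$ share a vertex $z$, and since their four endpoints are distinct, $z$ is internal to both.

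It remains to check that every possibility for this shared internal vertex forces a triangle. Here $z\in\{u_0,w_0\}\cap\{u_1,w_1\}$, and I would examine the four cases. If $z=u_0=u_1$ then $z\sim v_0$ and $z\sim v_1$, giving the triangle $zv_0v_1$; if $z=u_0=w_1$ then $z\sim v_0$ and $z\sim v_3$, giving $zv_0v_3$; if $z=w_0=u_1$ then $z\sim v_2$ and $z\sim v_1$, giving $zv_1v_2$; and if $z=w_0=w_1$ then $z\sim v_2$ and $z\sim v_3$, giving $zv_2v_3$. In each case $z$ is adjacent to two consecutive cycle vertices, which are themselves adjacent, so $G$ contains a triangle --- the desired contradiction. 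I expect the main obstacle to be making the crossing step fully rigorous rather than the case analysis: one must argue cleanly that in a plane embedding the two paths, with interleaving endpoints on the boundary of the face-complement disk, are forced to share an actual vertex, and carefully rule out the degenerate situations (paths running along $C$, or $u_0=u_1$) that would otherwise short-circuit the topology --- which is exactly why I peel those cases off first.
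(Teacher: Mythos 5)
Your proof is correct and follows essentially the same route as the paper's: assume both distances are at most $2$, extend the two short paths to $v_0$ and $v_1$, use planarity of the $4$-face to force the paths to share a vertex, and invoke triangle-freeness to rule out every possible identification of that shared vertex. You are somewhat more explicit than the paper about the degenerate cases --- in particular the standing assumption that $u_0\notin\{v_1,v_3\}$ and $u_1\notin\{v_0,v_2\}$, which is indeed needed for the statement as literally written and is satisfied in all of its applications --- but the underlying argument is identical.
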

By contradiction, suppose that $u_0$ and $v_2$ are at distance at most two, and that $u_1$ and $v_3$ are at distance at most two. Since $G$ is triangle-free, a shortest path from $u_0$ to $v_2$ (resp. from $u_1$ to $v_3$) does not contain any of the $u_i$ and $v_i$ except for its extremities. Then by planarity there exists a vertex $w$ adjacent to $u_0$, $v_2$, $u_1$ and $v_3$. In particular $v_2v_3w$ is a triangle, a contradiction.

\begin{lemm} \label{config1}
  The following configuration does not occur in $G$: two adjacent $3$-vertices $v_0$ and $v_1$ such that for $i \in \{0,1\}$, $v_i$ has a big neighbour $b_i$ and a small neighbour $s_i$, and such that $v_0v_1s_1b_0$ bounds a face of $G$.
\end{lemm}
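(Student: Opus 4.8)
The plan is to derive a contradiction with the minimality of $G$ by removing the two $3$-vertices $v_0$ and $v_1$, applying the induction hypothesis to the smaller graph, and then showing that the resulting $(\mathcal{F},\mathcal{F}_5)$-partition can always be extended to $v_0$ and $v_1$. First I would set $G' = G - v_0 - v_1$; by minimality of $G$, the graph $G'$ admits an $(\mathcal{F},\mathcal{F}_5)$-partition $(F,D)$. The four remaining neighbours of $v_0$ and $v_1$ are $b_0, b_1, s_0, s_1$, and each of them already lies in either $F$ or $D$. My task is to assign $v_0$ and $v_1$ to $F$ or $D$ without creating a cycle in the forest part or a cycle (or a degree-$6$ vertex) in the bounded-degree part.

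The key structural leverage comes from Observation~\ref{dist3} applied to the $4$-face $v_0 v_1 s_1 b_0$. Reading off the correspondence $(v_0,v_1,v_2,v_3) = (v_0, v_1, s_1, b_0)$ with $u_0 = s_0$ a neighbour of $v_0$ and $u_1 = b_1$ a neighbour of $v_1$, the observation tells me that either $s_0$ and $s_1$ are at distance at least $3$, or $b_1$ and $b_0$ are at distance at least $3$; in particular these two vertices are non-adjacent and share no common neighbour. I expect to use this to control whether placing $v_0$ and $v_1$ into the same part can close a short cycle through their outer neighbours. The idea is that when I put $v_0$ (and possibly $v_1$) into $F$, the only way to create a cycle in $G[F]$ is for its two or three $F$-neighbours to already be connected in $G[F]$ by a path avoiding $v_0, v_1$; the distance-$3$ guarantee rules out the dangerous short connections through $s_0, s_1$ or $b_0, b_1$.

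Next I would argue the extension case by case according to how many of the outer neighbours lie in $D$. The guiding principles are the standard ones used in the earlier lemmas: a vertex can be safely added to $F$ if at most one of its current $F$-neighbours is in $F$ and they are not otherwise joined in $G[F]$; and a vertex can be safely added to $D$ if it has at most one $D$-neighbour, that $D$-neighbour is not already at its degree cap, and no $D$-cycle is closed. Since $v_0$ and $v_1$ are $3$-vertices, each has only one neighbour outside the pair, so most placements immediately keep both forests acyclic; the genuine work is to use the distance-$3$ conclusion (together with triangle-freeness) to eliminate the few configurations in which both $v_0$ and $v_1$ would have to go to the same part and create a cycle. When a naive placement fails, I would recolour: for instance, moving one of $s_0, s_1, b_0, b_1$ between the parts — here the fact that $b_0, b_1$ are big is not directly helpful, but the smallness of $s_0, s_1$ bounds their degrees in $D$ — so that room is freed for $v_0$ and $v_1$.

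I expect the main obstacle to be the case analysis on the four outer neighbours' memberships in $(F,D)$, specifically the subcases where the forced placement of both $v_0$ and $v_1$ into $F$ threatens an $F$-cycle. Handling this cleanly is exactly where Observation~\ref{dist3} must be invoked: it guarantees that at least one of the pairs $\{s_0,s_1\}$ or $\{b_0,b_1\}$ cannot be the endpoints of a short $F$-path, so one of the two orientations of the extension is always safe. The rest of the argument should reduce to the same local forest/degree checks already illustrated in Lemmas~\ref{degge3}–\ref{4-5star}, and I would conclude in every case that $G$ itself admits an $(\mathcal{F},\mathcal{F}_5)$-partition, contradicting its choice as a minimal counter-example.
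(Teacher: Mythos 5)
Your plan has a genuine gap: deleting $v_0$ and $v_1$ and partitioning $G-\{v_0,v_1\}$ is not enough, and the way you invoke Observation~\ref{dist3} does not do the work you ask of it. The hard case is $b_0,b_1\in D$ and $s_0,s_1\in F$. There, putting both $v_0,v_1$ in $F$ closes a cycle whenever $s_0$ and $s_1$ lie in the same component of the forest $G'[F]$ --- and a \emph{long} connecting path is exactly as fatal as a short one, so the distance-$3$ conclusion (which only forbids paths of length at most $2$ in $G$) rules out nothing here. Putting both in $D$ can likewise close a cycle through a long $D$-path from $b_0$ to $b_1$, and in addition can push the $D$-degrees of $b_0$ and $b_1$ past $5$. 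The mixed placements fail too: making $v_1$ a pendant of $b_1$ in $D$ already raises $b_1$'s $D$-degree, which may be saturated at $5$. So there is a configuration of the partition of $G-\{v_0,v_1\}$ in which none of the four placements works, and your suggested recolouring of $s_0,s_1,b_0,b_1$ has no guarantee of succeeding (e.g.\ a big vertex in $D$ with five $D$-neighbours has at least two $F$-neighbours, so moving it to $F$ may itself create a cycle).

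The actual role of Observation~\ref{dist3} in the paper is different: it certifies that one of the two graphs $G_0=G-\{v_0,v_1\}+b_0b_1$ and $G_1=G-\{v_0,v_1\}+s_0s_1$ is still triangle-free, so minimality can be applied to a graph that \emph{contains the added edge}. That edge is the key device: in the bad case the partition of $G_j$ already accounts for the connection between the two endpoints (if $b_0,b_1\in D$ the edge $b_0b_1$ lies in $G_0[D]$, so each counts the other among its at most five $D$-neighbours and they are in the same tree), and reinserting $v_0,v_1$ amounts to subdividing that edge twice, which preserves both acyclicity and all degrees. Without this edge-addition step your argument cannot be completed, so you would need to incorporate it (or find a genuinely different mechanism) before the case analysis you sketch can go through.
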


\begin{proof}
  Suppose such a configuration exists in $G$. See Figure~\ref{figconfig1} for an illustration of this configuration. Observe that all the vertices defined in the statement are distinct (since $G$ is triangle-free). By Observation~\ref{dist3}, either $b_0$ and $b_1$ are at distance at least $3$, or $s_0$ and $s_1$ are at distance at least $3$. For the remaining of the proof, we no longer need the fact that $b_0s_1 \in E(G)$. We forget this assumption, and only remember that either $b_0$ and $b_1$ are at distance at least $3$, or $s_0$ and $s_1$ are at distance at least $3$. This provides some symmetry in the graph.

  Let $G_0 = G - \{v_0,v_1\} + b_0b_1$ and $G_1 = G - \{v_0,v_1\} + s_0s_1$. By what precedes, either $G_0$ or $G_1$ is triangle-free, thus there exists a $j$ such that $G_j$ is a triangle-free planar graph. By minimality of $G$, $G_j$ admits an $({\cal F},{\cal F}_5)$-partition $(F,D)$. 

  Let us first prove that if we do not have $b_0$ and $b_1$ in $D$, and $s_0$ and $s_1$ in $F$, then the conditions $G[F] \in {\cal F}$ and $G[D] \in {\cal F}_5$ lead to a contradiction. We will see that we can always extend the $({\cal F},{\cal F}_5)$-partition of $G_j$ to $G$.

  \begin{itemize}
  \item
    If at least three of the $b_i$ and $s_i$ are in $D$, then $G[F \cup \{v_0,v_1\}] \in {\cal F}$. 

  \item If all of the $b_i$ and $s_i$ are in $F$, then $G[D \cup \{v_0,v_1\}] \in {\cal F}_5$.

  \item
    Suppose now that exactly three of the $b_i$ and $s_i$ are in $F$. W.l.o.g., $b_0 \in D$ or $s_0 \in D$. We have $G[F \cup \{v_0\}] \in {\cal F}$ and $G[D \cup \{v_1\}] \in {\cal F}_5$.

  \item
    Suppose now that exactly two of the $b_i$ and $s_i$ are in $F$. If $b_0$ and $s_0$ are in $F$ (resp. $b_1$ and $s_1$ are in $F$), then $G[D \cup \{v_0\}] \in {\cal F}_5$ and $G[F \cup \{v_1\}] \in {\cal F}$ (resp. $G[F \cup \{v_0\}] \in {\cal F}$ and $G[D \cup \{v_1\}] \in {\cal F}_5$).

    Now w.l.o.g. $b_0 \in F$ and $s_0 \in D$. If $s_0$ has at most one $G$-neighbour in $F$, then  $G[F \cup \{s_0\}] \in {\cal F}$, we can replace $F$ by $F \cup \{s_0\}$ and $D$ by $D \backslash \{s_0\}$, and we fall into a previous case. We can thus assume that $s_0$ has at least two of its $G$-neighbours in $F$, and thus it has at most four of its $G$-neighbours in $D$. Therefore $G[D \cup \{v_0\}] \in {\cal F}_5$, and $G[F \cup \{v_1\}] \in {\cal F}$.
  \end{itemize}
  In all cases, $G$ has an $({\cal F},{\cal F}_5)$-partition, a contradiction.
  \medskip

  Remains the case where $b_0$ and $b_1$ are in $D$, and $s_0$ and $s_1$ are in $F$. In the case where we added the edge $b_0b_1$ (i.e. the case $j = 0$), we have $G[D \cup \{v_0,v_1\}] \in {\cal F}_5$, since $G[D \cup \{v_0,v_1\}]$ is equal to $G_0[D]$ where an edge is subdivided twice. Similarily, in the case where we added the edge $s_0s_1$ (i.e. the case $j = 1$), we have $G[F \cup \{v_0,v_1\}] \in {\cal F}$, since $G[F \cup \{v_0,v_1\}]$ is equal to $G_0[F]$ where an edge is subdivided twice. Again, $G$ has an $({\cal F},{\cal F}_5)$-partition, a contradiction.
\end{proof}

\begin{figure}[h]
  \begin{center}
    \begin{tikzpicture}
      \coordinate (v0) at (0,0);
      \coordinate (v1) at (1,1);
      \coordinate (s1) at (2,0);
      \coordinate (b0) at (1,-1);
      \coordinate (s0) at (-1,1);
      \coordinate (w0) at (0,2);
      \coordinate (w1) at (2,2);

      \draw (v0) node [below left] {$v_0$} ;
      \draw (1.1,1) node [right] {$v_1$} ;
      \draw (s1) node [below right] {$s_1$} ;
      \draw (1,-1.1) node [below] {$b$} ;
      \draw (s0) node [left] {$s_0$} ;
      \draw (w0) node [above left] {$w_0$} ;
      \draw (w1) node [above right] {$w_1$} ;

      \draw (v0) -- (v1);
      \draw (v0) -- (b0);
      \draw (v0) -- (s0);
      \draw (v1) -- (s1);
      \draw (v1) -- (w0);
      \draw (v1) -- (w1);
      \draw (s1) -- (b0);
      \draw (s1) -- ++(0.5,0.5);

      \draw [fill=black](v0) circle (1.5pt) ;
      \draw [fill=black](v1) circle (1.5pt) ;
      \draw [fill=black](s1) circle (1.5pt) ;
      \draw [fill=white](b0) circle (4pt) ;
      \draw [fill=white](s0) circle (1.5pt) ;
      \draw [fill=white](w0) circle (1.5pt) ;
      \draw [fill=white](w1) circle (1.5pt) ;

    \end{tikzpicture}
  \end{center}
  \caption{The forbidden configuration of Lemma~\ref{config2}.}\label{figconfig2}
\end{figure}
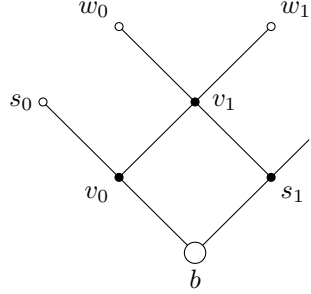

\begin{lemm} \label{config2}
  The following configuration does not occur in $G$: a $3$-vertex $v_0$ adjacent to a $4$-vertex $v_1$ such that $v_0$ has a big neighbour $b$ and a small neighbour $s_0$, and $v_1$ has three other small neighbours $s_1$, $w_0$, and $w_1$ such that $v_0v_1s_1b$ bounds a face of $G$ and $s_1$ has degree $3$.
\end{lemm}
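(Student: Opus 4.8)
The plan is to argue by minimality, exactly in the spirit of Lemma~\ref{config1}. Assuming the configuration occurs, I would first apply Observation~\ref{dist3} to the $4$-face $v_0v_1s_1b$, taking as off-face neighbours $s_0$ of $v_0$ and $w_0$ of $v_1$: it gives that either $s_0$ and $s_1$ are at distance at least $3$, or $w_0$ and $b$ are at distance at least $3$. In the first case I would set $G'=G-\{v_0,v_1\}+s_0s_1$, and in the second $G'=G-\{v_0,v_1\}+w_0b$; in either case $G'$ is a triangle-free planar graph on fewer vertices, so by minimality it admits an $({\cal F},{\cal F}_5)$-partition $(F,D)$. Note that here all of $b,s_0,s_1,w_0,w_1$ and the third neighbour $t$ of $s_1$ keep their class, and only $v_0$ and $v_1$ remain to be placed.

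I would then extend $(F,D)$ to $G$ by assigning classes to $v_0$ and $v_1$, through a case analysis on the classes of the outer vertices. The principle, as in Lemma~\ref{config1}, is that whenever at most one of the two forest conditions is in danger we push $v_0$ and/or $v_1$ into the safe class, while in the critical case where the added edge $xy$ (with $\{x,y\}=\{s_0,s_1\}$ or $\{w_0,b\}$) has both endpoints in one class $X$, I would place $v_0,v_1\in X$, turning $xy$ into the path $xv_0v_1y$; this is a double subdivision of $xy$ and hence preserves both \emph{being a forest} and \emph{having maximum degree at most $5$}. The remaining obligation is to keep $G[D]$ (and $G[F]$) acyclic despite the $4$-cycle $v_0v_1s_1b$: whenever $b$ and $s_1$ lie in the same class I must forbid $v_0$ and $v_1$ from both joining it, and to free a slot I would use that $s_1$ has degree $3$, so that in $G'$ it has only the neighbours $b,t$ (together with the possibly added $s_0$), which makes it cheap to move $s_1$ to the other class when needed, just as Lemma~\ref{config1} does with $s_0$.

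The main obstacle, and the genuine difference with Lemma~\ref{config1}, is that $v_1$ is a $4$-vertex and therefore carries \emph{three} outer neighbours $s_1,w_0,w_1$ rather than two. A clean double subdivision of the added edge requires \emph{all} of $v_1$'s other outer neighbours to lie in the class opposite to $X$, so the single critical case of Lemma~\ref{config1} now splits into several sub-cases, according to whether $w_0$, $w_1$, or $s_1$ falls on the wrong side and whether putting $v_1$ into $D$ would close a cycle through $w_0$ and $w_1$; I expect disposing of these sub-cases to be the bulk of the work. The tool is that $s_0,s_1,w_0,w_1,t$ are all small whereas only $b$ is big: a small outer vertex can be moved to $F$ when it has at most one neighbour there, or, being small, has few enough neighbours in $D$ to still accept one of $v_0,v_1$ — precisely the moves used in Lemma~\ref{config1}. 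The vertex $b$ is never moved, which is why the argument keeps $b$'s class fixed and only ever readjusts $v_0,v_1,s_1$ and the small outer vertices.
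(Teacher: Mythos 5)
Your setup is exactly the paper's: the same application of Observation~\ref{dist3} to the face $v_0v_1s_1b$ with off-face neighbours $s_0$ and $w_0$, the same two auxiliary graphs $G-\{v_0,v_1\}+s_0s_1$ and $G-\{v_0,v_1\}+bw_0$, and the same extension strategy for the easy cases. But there is a genuine gap precisely where you predict "the bulk of the work" to be: the case where the added edge is $bw_0$ and the partition of $G_0$ puts $b$, $w_0$ \emph{and} $w_1$ in $D$ while $s_0,s_1$ are in $F$. Here your double-subdivision move (put $v_0,v_1$ in $D$) fails because $v_1$ is also adjacent to $w_1\in D$, so $G[D\cup\{v_0,v_1\}]$ is not a subdivision of $G_0[D]$ and may contain a cycle through $w_0$, $v_1$, $w_1$; and your fallback of relocating small vertices does not close it either, since $w_1$ may have two $F$-neighbours (so it cannot be cheaply moved to $F$), while putting $v_0,v_1$ in $F$ risks a cycle through $s_0$ and $s_1$ (the edge $s_0s_1$ was \emph{not} added in this branch), and putting only $v_1$ in $D$ re-creates the $w_0$--$v_1$--$w_1$ cycle problem.

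The missing idea, which is how the paper disposes of this case, is to exploit the added edge $bw_0$ as a degree credit on the \emph{big} vertex $b$: since $b$ and $w_0$ are both in $D$ in $G_0$ and $G_0[D]\in{\cal F}_5$, the vertex $b$ has at most five $D$-neighbours in $G_0$, one of which is the artificial edge to $w_0$; hence $b$ has at most four $D$-neighbours in $G$. One can therefore put $v_0$ in $D$ (its only $D$-neighbour is $b$, so no cycle and no degree violation) and $v_1$ in $F$ (its only $F$-neighbour is $s_1$, so no cycle). Note that this is the opposite of your stated policy that only $v_0,v_1,s_1$ and the small outer vertices are ever "adjusted": $b$ is indeed never moved, but its bigness and the bookkeeping of the added edge incident to it are what make the hardest case go through. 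Without this observation the case analysis does not close.
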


\begin{proof}
  Suppose such a configuration exists in $G$. See Figure~\ref{figconfig2} for an illustration of this configuration. Observe that all the vertices defined in the statement are distinct (since $G$ is triangle-free). By Observation~\ref{dist3}, either $b$ and $w_0$ are at distance at least $3$, or $s_0$ and $s_1$ are at distance at least $3$. Let $G_0 = G - \{v_0,v_1\} + bw_0$ and $G_1 = G - \{v_0,v_1\} + s_0s_1$. By what precedes, either $G_0$ or $G_1$ is triangle-free, thus there exists a $j$ such that $G_j$ is a triangle-free planar graph. By minimality of $G$, $G_j$ has an $({\cal F},{\cal F}_5)$-partition $(F,D)$.

  Let us first prove that except in the case where $\{b,w_0,w_1\} \subset D$ and $\{s_0,s_1\} \subset F$, the conditions $G[F] \in {\cal F}$ and $G[D] \in {\cal F}_5$ lead to a contradiction. We will see that we can always extend the $({\cal F},{\cal F}_5)$-partition of $G_j$ to $G$.

  If at least four among the $w_i$, the $s_i$ and $b$ are in $D$, then $G[F \cup \{v_0,v_1\}] \in {\cal F}$.

  Suppose now that at most three among the $w_i$, the $s_i$ and $b$ are in $D$.
  Suppose $x \in \{b,s_0,s_1,w_0,w_1\}$ is in $D$. If $x$ has at most one $G$-neighbour in $F$, then $G[F \cup \{x\}] \in {\cal F}$, and we could consider $F \cup \{x\}$ instead of $F$ and $D \backslash \{x\}$ instead of $D$. Note that this cannot lead to the case we excluded ($\{b,w_0,w_1\} \subset D$ and $\{s_0,s_1\} \subset F$) unless at least four among the $w_i$, the $s_i$ and $b$ are in $D$. Thus we can assume that for any $x$ among the $w_i$ and $s_i$ such that $x \in D$, $x$ has at most four $G$-neighbours in $D$, and thus adding one neighbour of $x$ in $D$ cannot cause $x$ to have at least six neighbours in $D$. We consider two cases according to $b$:

  \begin{itemize}

  \item Suppose $b \in F$. If at least three of the $w_i$ and $s_i$ are in $F$, then $G[D \cup \{v_0,v_1\}] \in {\cal F}_5$. 

    If at least two among the $w_i$ and $s_1$ are in $D$, then $G[F \cup \{v_1\}] \in {\cal F}$ and $G[D \cup \{v_0\}] \in {\cal F}_5$. Else, at least two among the $w_i$ and $s_1$ are in $F$, and we may assume that $s_0$ is in $D$ (otherwise we fall into a previous case), so $G[D \cup \{v_1\}] \in {\cal F}_5$ and $G[F \cup \{v_0\}] \in {\cal F}$.

  \item Suppose now that $b \in D$. As $s_1$ has degree $3$, it has at most one $G$-neighbour in $F$, and thus as previously we could consider $F \cup \{s_1\}$ instead of $F$ and $D \backslash \{s_1\}$ instead of $D$. Again, this cannot lead to the case we excluded ($\{b,w_0,w_1\} \subset D$ and $\{s_0,s_1\} \subset F$) unless at least four among the $w_i$, the $s_i$ and $b$ are in $D$. Therefore we can assume that $s_1 \in F$. The $w_i$ are not both in $D$ (otherwise we fall into the case we excluded). We have $G[D \cup \{v_1\}] \in {\cal F}_5$ and $G[F \cup \{v_0\}] \in {\cal F}$.
  \end{itemize}

  In all cases, $G$ has an $({\cal F},{\cal F}_5)$-partition, a contradiction.

  Remains the case $\{b,w_0,w_1\} \subset D$ and $\{s_0,s_1\} \subset F$. In the case where we added the edge $bw_0$ (i.e. the case $j = 0$), $b$ has at most five $G_0$-neighbours in $D$, and thus at most four $G$-neighbours in $D$, so $G[D \cup \{v_0\}] \in {\cal F}_5$, and $G[F \cup \{v_1\}] \in {\cal F}$. In the case where we added the edge $s_0s_1$ (i.e. the case $j = 1$), we have $G[F \cup \{v_0,v_1\}] \in {\cal F}$, since $G[F \cup \{v_0,v_1\}]$ is equal to $G_0[F]$ where an edge is subdivided twice. Again, $G$ has an $({\cal F},{\cal F}_5)$-partition, a contradiction.
\end{proof}

\begin{figure}[h]
  \begin{center}
    \begin{tikzpicture}
      \coordinate (b0) at (0,0);
      \coordinate (w0) at (1,-1);
      \coordinate (v0) at (1,1);
      \coordinate (v1) at (2,0);
      \coordinate (v2) at (3,1);
      \coordinate (v3) at (2,2);
      \coordinate (b1) at (3,-1);

      \draw (-0.1,0) node [left] {$b_0$} ;
      \draw (w0) node [left] {$w_0$} ;
      \draw (v0) node [left] {$v_0$} ;
      \draw (v1) node [right] {$v_1$} ;
      \draw (v2) node [right] {$v_2$} ;
      \draw (v3) node [above] {$v_3$} ;
      \draw (3.1,-1) node [right] {$b_1$} ;

      \draw (b0) -- (w0);
      \draw (v1) -- (w0);
      \draw (b0) -- (v0);
      \draw (b1) -- (v1);
      \draw (v0) -- (v1);
      \draw (v1) -- (v2);
      \draw (v2) -- (v3);
      \draw (v0) -- (v3);
      \draw (w0) -- ++ (0.5, -0.5);

      \draw [fill=white](b0) circle (4pt) ;
      \draw [fill=black](w0) circle (1.5pt) ;
      \draw [fill=black](v0) circle (1.5pt) ;
      \draw [fill=black](v1) circle (1.5pt) ;
      \draw [fill=white](v2) circle (1.5pt) ;
      \draw [fill=white](v3) circle (1.5pt) ;
      \draw [fill=white](b1) circle (4pt) ;

    \end{tikzpicture}
  \end{center}
  \caption{Configuration~\ref{config3}.\label{figconfig3}}
\end{figure}

We define a specific configuration:

\begin{config} \label{config3}
  Two $4$-faces $b_0v_0v_1w_0$ and $v_0v_1v_2v_3$, such that $b_0$ is a big vertex, $v_0$ and $w_0$ are $3$-vertices, $v_1$ is a $4$-vertex, $v_2$ and $v_3$ are small vertices, and the fourth neighbour of $v_1$, say $b_1$, is a big vertex.
  See Figure~\ref{figconfig3} for an illustration of this configuration.
\end{config}

\begin{figure}[h]
  \begin{center}
    \begin{tikzpicture}
      \coordinate (b0) at (0,0);
      \coordinate (w0) at (1,-1);
      \coordinate (w0') at (2,-2);
      \coordinate (v0) at (1,1);
      \coordinate (v1) at (2,0);
      \coordinate (v2) at (3,1);
      \coordinate (v2') at (4,2);
      \coordinate (v3) at (2,2);
      \coordinate (b1) at (3,-1);
      \coordinate (w1) at (4,0);
      \coordinate (w1') at (5,1);

      \draw (-0.1,0) node [left] {$b_0$} ;
      \draw (w0) node [left] {$w_0$} ;
      \draw (v0) node [left] {$v_0$} ;
      \draw (v1) node [right] {$v_1$} ;
      \draw (v2) node [right] {$v_2$} ;
      \draw (v3) node [above] {$v_3$} ;
      \draw (3.1,-1) node [right] {$b_1$} ;
      \draw (w1) node [right] {$w_1$} ;

      \draw (b0) -- (w0);
      \draw (v1) -- (w0);
      \draw (b0) -- (v0);
      \draw (b1) -- (v1);
      \draw (b1) -- (w1);
      \draw (v0) -- (v1);
      \draw (v1) -- (v2);
      \draw (v2) -- (v3);
      \draw (v0) -- (v3);
      \draw (w0) -- (w0');
      \draw (v2) -- (v2');
      \draw (v2) -- (w1);
      \draw (w1) -- (w1');

      \draw [fill=white](b0) circle (4pt) ;
      \draw [fill=black](w0) circle (1.5pt) ;
      \draw [fill=white](w0') circle (1.5pt) ;
      \draw [fill=black](v0) circle (1.5pt) ;
      \draw [fill=black](v1) circle (1.5pt) ;
      \draw [fill=black](v2) circle (1.5pt) ;
      \draw [fill=white](v2') circle (1.5pt) ;
      \draw [fill=white](v3) circle (1.5pt) ;
      \draw [fill=white](b1) circle (4pt) ;
      \draw [fill=black](w1) circle (1.5pt) ;
      \draw [fill=white](w1') circle (1.5pt) ;

    \end{tikzpicture}
  \end{center}
  \caption{The forbidden configuration of Lemma~\ref{config4}.\label{figconfig4}}
\end{figure}

\begin{lemm} \label{config4}
  The following configuration is forbidden: Configuration~\ref{config3} with the added condition that there is a $4$-face $b_1v_1v_2w_1$ with $w_1$ a $3$-vertex, $v_2$ a $4$-vertex, and the fourth neighbour of $v_2$, the third neighbour of $w_1$, and the third neighbour of $w_0$ are small vertices.
\end{lemm}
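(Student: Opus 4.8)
The plan is to show that Configuration~\ref{config3} together with the added conditions is reducible, in the same spirit as Lemmas~\ref{config1} and~\ref{config2}: delete the central path $v_0v_1v_2$, reconnect some surviving neighbours by a carefully chosen edge (to preserve triangle-freeness), invoke minimality of $G$, and extend the partition. First I would record that since $G$ is triangle-free all the named vertices are distinct, and spell out the deleted set $\{v_0,v_1,v_2\}$: it induces the path $v_0v_1v_2$, whose only surviving neighbours are $b_0,v_3$ (from $v_0$), $w_0,b_1$ (from $v_1$) and $v_3,w_1,v_2'$ (from $v_2$), where $v_2'$ is the fourth neighbour of $v_2$, $w_0'$ the third neighbour of $w_0$, and $w_1'$ the third neighbour of $w_1$; all of $v_3,w_0,w_1,v_2',w_0',w_1'$ are small. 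The three relevant $4$-faces are $C_1=b_0v_0v_1w_0$, $C_2=v_0v_1v_2v_3$ and $C_3=b_1v_1v_2w_1$.

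Next, to keep the reconnected graph triangle-free, I would apply Observation~\ref{dist3} to the faces $C_1$ and $C_3$, each time choosing the two neighbours that do \emph{not} lie on the face (e.g.\ applying it to $C_1$ with the off-face neighbours $v_3$ of $v_0$ and $b_1$ of $v_1$ yields $d(v_3,w_0)\ge 3$ or $d(b_0,b_1)\ge 3$). Each application gives a disjunction ``$d(\cdot,\cdot)\ge 3$'' between a pair of \emph{big} vertices and a pair of \emph{small} vertices, which, exactly as in the previous lemmas, lets me define two candidate graphs $G_0=G-\{v_0,v_1,v_2\}+E_{\mathrm{big}}$ and $G_1=G-\{v_0,v_1,v_2\}+E_{\mathrm{small}}$, where $E_{\mathrm{big}}$ joins big neighbours (so that $b_0b_1$ is subdivided by the path $b_0v_0v_1b_1$) and $E_{\mathrm{small}}$ joins small neighbours (e.g.\ $v_3w_0$, subdivided by $v_3v_0v_1w_0$). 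The observation guarantees that at least one $G_j$ is simple, planar and triangle-free, so by minimality of $G$ it admits an $({\cal F},{\cal F}_5)$-partition $(F,D)$.

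I would then extend $(F,D)$ to $G$ by a case analysis on the parts ($F$ or $D$) of the surviving neighbours $b_0,b_1,v_3,w_0,w_1,v_2'$, placing $v_0,v_1,v_2$ so as to avoid any cycle through $C_1,C_2,C_3$ and to respect the degree bound in $D$. Whenever a \emph{small} surviving neighbour lies in $D$, I would reuse the recurring device of the earlier proofs: if it has at most one $G$-neighbour in $F$ it can be moved to $F$, reducing to an easier case; otherwise, being a $7^-$-vertex that has already lost one or two neighbours to the deletion, it has at most four $G$-neighbours in $D$, so reinserting an adjacent spine vertex into $D$ leaves its $D$-degree at most $5$. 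This makes all placements routine \emph{except} those in which a big vertex sits in $D$.

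The main obstacle is precisely that last situation: since $b_0$ and $b_1$ have degree at least $8$, I cannot cap their $D$-degree by moving them to $F$, so carelessly adding $v_0$ or $v_1$ to $D$ could give one of them a sixth $D$-neighbour. This is exactly what the pre-added edge is for. In the critical case where the big vertices lie in $D$ and the relevant small neighbours ($w_0,v_3,v_2',w_1$) lie in $F$, I would place the spine in $D$ so that $G[D\cup\{v_0,v_1,v_2\}]$ equals $G_j[D]$ with the added edge $b_0b_1$ subdivided twice along $b_0v_0v_1b_1$ and with $v_2$ attached to $v_1$ as a pendant leaf; this preserves the forest property and leaves the $D$-degrees of $b_0$ and $b_1$ unchanged, while the symmetric small-edge case $v_3w_0$ keeps $F$ a forest by the same subdivision argument. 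I expect the delicate point to be the simultaneous bookkeeping at $b_0$, $b_1$ and the shared corner $v_3$ of $C_2$, since $v_3$ is adjacent to both $v_0$ and $v_2$ and thus couples the two ends of the configuration; verifying that the finitely many remaining placements all succeed then yields the desired $({\cal F},{\cal F}_5)$-partition of $G$, contradicting the choice of $G$.
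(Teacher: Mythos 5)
Your high-level plan --- delete a spine, pre-add an edge between big neighbours or between small neighbours so the subdivision trick applies, then extend by cases --- is indeed the shape of the paper's argument (which sets $G_0 = G-\{v_0,v_1\}+b_0b_1$ and $G_1 = G-\{v_0,v_1\}+w_0w_1+w_0v_3$), but two steps of your sketch do not go through as stated. First, the triangle-freeness of at least one auxiliary graph cannot be obtained by applying Observation~\ref{dist3} separately to the faces $b_0v_0v_1w_0$ and $b_1v_1v_2w_1$. The face $b_1v_1v_2w_1$ does not yield a ``big pair versus small pair'' disjunction at all: the observation pairs a neighbour of $v_2$ with $b_1$, and every off-face neighbour of $v_2$ is small. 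More importantly, two independent disjunctions leave uncovered the mixed case where $b_0$ and $b_1$ are at distance at most $2$ \emph{and} one of the needed small pairs is at distance at most $2$; in that case neither candidate graph is certified triangle-free. The paper has to prove a strictly stronger, bespoke statement --- either $d(b_0,b_1)\ge 3$, or \emph{both} $d(w_0,w_1)\ge 3$ and $d(w_0,v_3)\ge 3$ --- by a direct planarity argument that crucially uses that the two small pairs share the vertex $w_0$, which is adjacent to $b_0$ (a common nearby vertex $w$ would create the triangle $b_0w_0w$). Your sketch needs this, or an equivalent, and does not supply it.

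Second, deleting $v_2$ along with $v_0,v_1$ creates a re-insertion problem that your bookkeeping does not control. The surviving neighbours of $v_2$ are $v_3$, $w_1$ and its fourth neighbour, and you add no edge among them; nothing prevents, say, $v_3,w_1\in D$ being joined by a path of $G_j[D]$ (so $v_2\in D$ closes a $D$-cycle) while the $F$-side placement also closes a cycle through $v_1$ --- which is forced into $F$ when $b_1$ is saturated in $D$ --- and $w_0$ or the fourth neighbour. Your degree argument for small vertices handles the ${\cal F}_5$ condition but not these acyclicity obstructions. The paper avoids the issue by deleting only $\{v_0,v_1\}$, inheriting $v_2$'s position from the partition of the smaller graph, and relocating $v_2$ only after explicitly verifying that it is safe; it also includes the edge $w_0w_1$ in $G_1$ precisely so that when $w_0,w_1\in F$ they lie in distinct components of $G[F]$, which is what licenses putting $v_1$ into $F$ next to $v_2\in F$ without closing the cycle $w_0v_1v_2w_1$. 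Without these two ingredients the ``finitely many remaining placements'' do not all succeed.
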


\begin{proof}
  Suppose such a configuration exists in $G$. See Figure~\ref{figconfig4} for an illustration of this configuration. Observe that all the vertices named in the statement are distinct since $G$ is triangle-free and $w_1$ is a small vertex whereas $b_0$ is a big one.

  Let us prove that either $b_0$ and $b_1$ are at distance at least $3$, or $w_0$ and $w_1$, and $w_0$ and $v_3$ are at distance at least $3$. By contradiction, suppose that $b_0$ and $b_1$ are at distance at most two, and that either $w_0$ and $w_1$ are at distance at most two, or $w_0$ and $v_3$ are at distance at most $2$. Since $G$ is triangle-free, a shortest path from $b_0$ to $b_1$, from $w_0$ to $w_1$ or from $w_0$ to $v_3$ does not go through any of the vertices defined in the statement. Then by planarity there exists a vertex $w$ adjacent to $b_0$, $b_1$, $w_0$ and either $w_1$ or $v_3$. In particular $b_0w_0w$ is a triangle, a contradiction.

  Let $G_0 = G - \{v_0,v_1\} + b_0b_1$ and $G_1 = G - \{v_0,v_1\} + w_0w_1 + w_0v_3$. By what precedes, either $G_0$ or $G_1$ is triangle-free, thus there exists a $j$ such that $G_j$ is a triangle-free planar graph. By minimality of $G$, $G_j$ has an $({\cal F},{\cal F}_5)$-partition $(F,D)$.

  Let $s_0$ be the third neighbour of $w_0$, $s_1$ be the third neighbour of $w_1$ and $s_2$ be the fourth neighbour of $v_2$. They are all small vertices, but there may be some that are equal between themselves, or equal to some vertices we defined previously. However, if one of the $s_i$ is in $\{v_0,v_1,v_2,w_0,w_1\}$, then this $s_i$ is a $4^-$-vertex in $G$ (and in particular it has at most $4$ neighbours in $D$).

  Suppose first that $b_0$ and $b_1$ are both in $D$.
  \begin{enumerate}
  \item \label{1stpoint} Suppose $w_0$ is in $D$. Here we only consider $(F,D)$ as an $({\cal F},{\cal F}_5)$-partition of $G - \{v_0,v_1\}$.

    If $v_3$ is also in $D$, then adding $v_0$ and $v_1$ to $F$ leads to an $({\cal F},{\cal F}_5)$-partition of $G$. Suppose $v_3$ is in $F$. We show now that we can assume that $v_2$ is in $D$. By contradiction, suppose $v_2$ is in $F$. We remove $v_2$ from $F$.
    
    Observe that we can assume that $v_2$ has no $G$-neighbour in $D$ with five $G$-neighbours in $D$. Indeed, suppose $v_2$ has a $G$-neighbour in $D$ with five $G$-neighbours in $D$. This $G$-neighbour is a $5^+$-vertex, so it is $s_2$. Moreover, $s_2$ is not equal to $v_3$ (because $v_3$ is in $F$), and is not equal to any of the other vertices named in the statement (because of the degree conditions). As $s_2$ is a small $D$-vertex, has at least five $G$-neighbours in $D$ and is adjacent to $v_2$ that is neither in $F$ nor in $D$, $s_2$ has at most one neighbour in $F$. Therefore we can put $s_2$ in $F$.

    Observe that we can assume that $v_2$ has at most one $G$-neighbour in $D$. Suppose $v_2$ has two $G$-neighbours in $D$. These $G$-neighbours are $s_2$ and $w_1$. Vertex $w_1$ has at most one neighbour in $F$, that is $s_1$, so we can put $w_1$ in $F$.

    Now $v_2$ has at most one $G$-neighbour in $D$, and no $G$-neighbour of $v_2$ in $D$ has five $G$-neighbours in $D$, so we can put $v_2$ in $D$. Therefore we can always assume that $v_2$ is in $D$. Note that we do not need to change where $s_2$ is in the partition if it is equal to one of the vertices named in the statement. Adding $v_0$ and $v_1$ to $F$ leads to an $({\cal F},{\cal F}_5)$-partition of $G$. 

  \item Suppose $w_0$ is in $F$, $v_3$ is in $D$ and $w_1$ is in $D$. If $s_2$ is in $D$, then putting $v_0$, $v_1$ and $v_2$ in $F$ leads to an $({\cal F},{\cal F}_5)$-partition of $G$. Suppose $s_2$ is in $F$. We put $v_0$, $v_1$ and $w_1$ in $F$, and $v_2$ in $D$. If this increases the number of $G$-neighbours of $v_3$ in $D$ above five, then since $v_3$ is small, $v_3$ has at most one neighbour in $F$, which is $v_0$, and we put $v_3$ in $F$. This leads to an $({\cal F},{\cal F}_5)$-partition of $G$. 

  \item Suppose $w_0$ is in $F$, $v_3$ is in $D$ and $w_1$ is in $F$. Suppose $s_2$ is in $F$. We put $v_0$ and $v_1$ in $F$, and $v_2$ in $D$. If this increases the number of $G$-neighbours of $v_3$ in $D$ above five, then since $v_3$ is small, $v_3$ has at most one neighbour in $F$, which is $v_0$, and we put $v_3$ in $F$. This leads to an $({\cal F},{\cal F}_5)$-partition of $G$. Suppose $s_2$ is in $D$. If $v_2$ is not in $F$, we may put it in $F$, since it has only one $G_j$-neighbour in $F$, that is $w_1$. Therefore we can assume that $v_2$ is in $F$. If $j = 0$, then $b_1$ has at most $4$ $G$-neighbours in $D$ (since it has at most $5$ such $G_0$-neighbours), so adding $v_0$ to $F$ and $v_1$ to $D$ leads to an $({\cal F},{\cal F}_5)$-partition of $G$. If $j = 1$, then adding $v_0$ and $v_1$ to $F$ leads to an $({\cal F},{\cal F}_5)$-partition of $G$.

  \item Suppose $w_0$ is in $F$ and $v_3$ is in $F$. Suppose $j = 0$. The vertex $b_0$ has at most $4$ $G$-neighbours in $D$ (since it has at most $5$ such $G_0$-neighbours), so we can add $v_0$ to $D$. If $v_2$ is in $D$, then adding $v_1$ to $F$ leads to an $({\cal F},{\cal F}_5)$-partition of $G$. If $v_2$ is in $F$, then adding $v_1$ to $D$ makes $G[D]$ equal to $G_0[D]$ with an edge subdivided twice, and this leads to an $({\cal F},{\cal F}_5)$-partition of $G$. Suppose $j = 1$. Here we only consider $(F,D)$ as an $({\cal F},{\cal F}_5)$-partition of $G - \{v_0,v_1\} + w_0v_3$. As in \ref{1stpoint}, we can suppose, up to changing where $s_2$ and $w_1$ are in the partition, that $v_2$ is in $D$. Note that if $s_2$ is equal to one of the vertices named in the statement, we do not need to move $s_2$ in the partition. Adding $v_0$ and $v_1$ to $F$ leads to an $({\cal F},{\cal F}_5)$-partition of $G$.
  \end{enumerate}

  Now we may assume that at least one of $b_0$ and $b_1$ is in $F$. From now on we only consider $(F,D)$ as an $({\cal F},{\cal F}_5)$-partition of $G - \{v_0,v_1\}$.

  \begin{itemize}
  \item Suppose $b_0$ is in $F$ and $b_1$ is in $D$. In that case we put $v_0$ and $w_0$ in $D$, and $v_1$ in $F$. Adding $v_0$ in $D$ (resp. $w_0$ in $D$) may violate the degree condition of $G[D]$ ; however, if it happens, one can put $v_3$ (resp. $s_0$) in $F$. In any case, we obtain an $({\cal F},{\cal F}_5)$-partition of $G$.

  \item Suppose $b_0$ is in $D$ and $b_1$ is in $F$. If at least one of $w_0$ and $v_2$ is in $F$, then adding $v_0$ in $F$ and $v_1$ in $D$ leads to an $({\cal F},{\cal F}_5)$-partition of $G$. Assume $w_0$ and $v_2$ are both in $D$. If $v_3$ is in $D$, then adding $v_0$ and $v_1$ in $F$ leads to an $({\cal F},{\cal F}_5)$-partition of $G$. Assume $v_3$ is in $F$. We consider three cases:

    \begin{itemize}
    \item Suppose $s_2$ and $w_1$ are in $F$. Adding $v_0$ in $F$ and $v_1$ in $D$ leads to an $({\cal F},{\cal F}_5)$-partition of $G$.

    \item Suppose $s_2$ is in $F$ and $w_1$ is in $D$. If $s_1$ is in $D$, then we can put $w_1$ in $F$ and we fall into the previous case. If $s_1$ is in $F$, then adding $v_0$ in $F$ and $v_1$ in $D$ leads to an $({\cal F},{\cal F}_5)$-partition of $G$.

    \item Suppose $s_2$ is in $D$. If $s_1$ is in $D$ and has five $G$-neighbours in $D$ distinct from $w_1$, then as $s_1$ is small, it is distinct from all the vertices named in the statement, and we can put it in $F$. Therefore we can put $w_1$ in $D$ and $v_2$ in $F$. We fall into a previous case (at least one of $w_0$ and $v_2$ is in $F$).
    \end{itemize}

  \item Suppose $b_0$ and $b_1$ are in $F$. If $s_0$ is in $D$ and has five $G$-neighbours in $D$ distinct from $w_0$, then as $s_0$ is small, it is distinct from all the vertices named in the statement aside from $v_3$, and we can put it in $F$. Therefore we can put $w_0$ in $D$. We consider the following cases:

    \begin{itemize}
    \item If $v_2$ and $v_3$ are in $F$, then adding $v_0$ and $v_1$ to $D$ leads to an $({\cal F},{\cal F}_5)$-partition of $G$.

    \item If $v_2$ is in $F$ and $v_3$ is in $D$, then adding $v_0$ to $F$ and $v_1$ to $D$ leads to an $({\cal F},{\cal F}_5)$-partition of $G$.

    \item If $v_2$ is in $D$ and $v_3$ is in $F$, then adding $v_0$ to $D$ and $v_1$ to $F$ leads to an $({\cal F},{\cal F}_5)$-partition of $G$.

    \item If $v_2$ and $v_3$ are in $D$, then adding $v_0$ to $D$ and $v_1$ to $F$ leads to an $({\cal F},{\cal F}_5)$-partition of $G$. Adding $v_0$ to $D$ may violate the degree condition of $G[D]$, but in that case we can put $v_3$ in $F$.
    \end{itemize}
  \end{itemize}

\end{proof}

We now apply a discharging procedure: first, for all $j$, every $j$-vertex $v$ has a charge equal to $c_0(v) = j - 4$, and every $j$-face $f$ has a charge equal to $c_0(f) = j - 4$. By Euler's formula, the total charge is negative (equal to $-8$). Observe that, since $G$ is triangle-free, every face has a non-negative initial charge, and by Lemma~\ref{degge3}, the vertices that have negative initial charges are exactly the $3$-vertices of $G$, and they have an initial charge of $-1$. Here is our discharging procedure:
\medskip


\textbf{Discharging procedure:} 

\begin{itemize}
\item
  \emph{Step 1}: Every big vertex gives $\frac{1}{2}$ to each of its small neighbours. Furthermore, for every $4$-face $uvwx$ where $u$ and $v$ are big, and $w$ and $x$ are small, $v$ gives $\frac{1}{4}$ to $x$ (and $u$ gives $\frac{1}{4}$ to $w$).

\item
  \emph{Step 2}: Consider a $4$-vertex $v$ that does not correspond to $v_1$ in Configuration~\ref{config3}. Vertex $v$ gives $\frac{1}{4}$ to each of its small neighbours that are consecutive (as neighbours of $v$) to exactly one big vertex, and $\frac{1}{2}$ to each of its small neighbours that are consecutive (as neighbours of $v$) to two big vertices. 

  Consider the case where $v$ corresponds to $v_1$ in Configuration~\ref{config3}. We use the notations of Configuration~\ref{config3}. If $w_0$ has two big neighbours, then $v_1$ gives $\frac{1}{4}$ to $v_0$ and $\frac{1}{4}$ to $v_2$. Otherwise, it gives $\frac{1}{4}$ to $w_0$ and $\frac{1}{4}$ to $v_0$.

  Every small $5^+$-vertex that has a big neighbour gives $\frac{1}{4}$ to each of its small neighbours, and an additional $\frac{1}{4}$ for each that is consecutive (as neighbours of $v$) to at least one big vertex. Every small $5^+$-vertex that has no big neighbour gives $\frac{1}{4}$ to each of its $3$-neighbours.

\item
  \emph{Step 3}: For every $4$-face $uvwx$, with $u$ a big vertex, $v$ a $3$-vertex, $w$ a $4$-vertex, and $x$ a small vertex such that $x$ gave charge to $w$ in Step $2$, $w$ gives $\frac{1}{4}$ to $v$.

\item
  \emph{Step 4}: Every $5^+$-face that has a big vertex in its boundary gives $\frac{1}{4}$ to each of the small vertices in its boundary. Every $5^+$-face that has no big vertex in its boundary gives $\frac{1}{5}$ to each of the vertices in its boundary.

\item
  \emph{Step 5}: For every $4$-face $uvwx$, with $u$ a big vertex, $v$ a $3$-vertex, $w$ a $4$-vertex and $x$ a $3$-vertex such that the other face that has $vw$ in its boundary is a $5^+$-face, $w$ gives $\frac{1}{5}$ to $v$.
\end{itemize}

For every vertex or face $x$ of $G$, for every $i \in \{1,2,3,4,5\}$, let $c_i(x)$ be the charge of $x$ at the end of Step $i$. Observe that during the procedure, no charges are created and no charges disappear; hence the total charge is kept fixed.

We now prove that every vertex and every face has a non-negative charge at the end of the procedure.
That leads to the following contradiction:
$$0 \le \sum_{x \in V(G) \cup F(G)}c_5(x) = \sum_{x \in V(G) \cup F(G)}c_0(x) = -8$$

\begin{lemm} \label{faces}
  Every face has non-negative charge at the end of the procedure.
\end{lemm}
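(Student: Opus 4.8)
The plan is to notice that a face can only lose charge at one place in the procedure. Going through Steps 1--5, every transfer other than the one in Step 4 moves charge from a vertex to a vertex, and faces never receive charge; only in Step 4 does a face send charge out, and only $5^+$-faces do so there. Since $G$ is triangle-free it contains no $3^-$-face, so every face is a $4^+$-face, and a $4$-face keeps its initial charge $c_0 = 4-4 = 0 \ge 0$ throughout the whole procedure. Hence the lemma reduces to a single $5^+$-face $f$, and I would fix its length $\ell \ge 5$, so that its initial charge is $\ell - 4$.

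Next I would split along the two bullets of Step 4, the key quantity being how many times the boundary walk of $f$ meets a small vertex. The boundary of $f$ is a closed walk with exactly $\ell$ vertex-incidences. In the first subcase $f$ has a big vertex on its boundary and gives $\frac14$ to each small vertex it meets; since at least one of the $\ell$ incidences is to a big vertex, at most $\ell - 1$ are to small vertices, so $c_5(f) \ge (\ell-4) - \frac{\ell-1}{4} = \frac{3\ell-15}{4} \ge 0$ for $\ell \ge 5$. In the second subcase $f$ has no big vertex and gives $\frac15$ to each of its at most $\ell$ boundary vertices, whence $c_5(f) \ge (\ell-4) - \frac{\ell}{5} = \frac{4\ell-20}{5} \ge 0$ for $\ell \ge 5$. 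Both bounds are tight exactly at $\ell = 5$, which is a reassuring check on the constants $\frac14$ and $\frac15$ built into Step 4.

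The one delicate point, and the only thing to be careful about, is the bookkeeping of multiplicities: if $G$ is not $2$-connected the boundary of $f$ need not be a simple cycle, so a single vertex may contribute several corners. I would therefore phrase every count in terms of the $\ell$ vertex-incidences of the bounding closed walk rather than in terms of distinct vertices; this is exactly what makes ``at most $\ell - 1$ small incidences'' in the first subcase unconditional, since giving per incidence only overestimates the charge sent out. With this convention the computation is purely arithmetic and no structural input is needed: the earlier lemmas (Lemmas~\ref{degge3}--\ref{config4}) are not used here, and I expect the real difficulty of the discharging argument to sit entirely in the companion analysis of vertices rather than in this faces lemma.
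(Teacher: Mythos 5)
Your proof is correct and follows essentially the same approach as the paper: observe that faces only lose charge in Step~4, that $4$-faces are untouched, and then do the arithmetic for $5^+$-faces. Your uniform bounds $(\ell-4)-\frac{\ell-1}{4}$ and $(\ell-4)-\frac{\ell}{5}$ for all $\ell\ge 5$ are a slight streamlining of the paper's split into $j=5$ and $j\ge 6$, and your care with boundary-walk multiplicities is a harmless refinement.
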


\begin{proof}
  At the beginning of the procedure, for every $j$-face $f$ we have $c_0(f) = j-4 \ge 0$ (as $j \ge 4$). The procedure does not involve $4$-faces. Hence if $j = 4$, then $c_5(f) = c_0(f) = 0$. If $j = 5$, then $f$ gives at most four times $\frac{1}{4}$ if it is incident to a big vertex and at most five times $\frac{1}{5}$ otherwise in Step $4$. It follows that $c_5(f) \ge 0$. If $j \ge 6$, then $f$ can give $\frac{1}{3}$ to each of its incident vertices (and so $\frac{1}{4}$ or $\frac{1}{5}$) during Step $4$, and $c_5(f) \ge j-4-\frac{j}{3} \ge 0$.
\end{proof}

\begin{lemm} \label{4+vertices}
  A $4^+$-vertex never has negative charge.
\end{lemm}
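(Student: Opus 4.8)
The plan is to treat each $4^+$-vertex $v$ according to its degree $j$, using that its initial charge $c_0(v)=j-4\ge 0$ is already non-negative, so it suffices to control the net amount $v$ sends out. The observation that organizes everything is that a \emph{big} vertex gives charge only in Step~1 and only receives afterwards, while a \emph{small} $4^+$-vertex never gives in Steps~1 and~4 and participates only in Steps~2, 3 and~5. I would therefore split into three regimes: big vertices ($j\ge 8$), small $5^+$-vertices ($5\le j\le 7$), and $4$-vertices.

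For a big vertex $v$ I would bound what it sends out in Step~1. It gives $\frac12$ to each of its $j-b$ small neighbours (where $b$ is its number of big neighbours), plus a diagonal $\frac14$ in each incident $4$-face of the big--big--small--small type. Charging each such $\frac14$ to the big neighbour of $v$ lying on that face, and noting that a fixed big neighbour $n_k$ lies only on the two faces incident to the edge $vn_k$ and so is charged at most $2\cdot\frac14$, the diagonal gifts total at most $\frac12 b$. Hence the Step~1 outflow is at most $\frac12(j-b)+\frac12 b=\frac{j}{2}$, so $c_1(v)\ge (j-4)-\frac{j}{2}=\frac{j}{2}-4\ge 0$; since $v$ only receives later, $c_5(v)\ge 0$.

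For a small $5^+$-vertex I would split on whether $v$ has a big neighbour. If it does, then in Step~1 it receives $\frac12$ per big neighbour, i.e. $\frac12 b$, while in Step~2 it gives $\frac14$ to each small neighbour and an extra $\frac14$ to each small neighbour consecutive to a big one; since each big neighbour accounts for at most two such consecutive small neighbours, the extra gifts total at most $\frac12 b$ and are exactly absorbed by the Step~1 income. This leaves $c_5(v)\ge (j-4)-\frac14 s\ge 0$, where $s\le j-1$ counts the small neighbours. If $v$ has no big neighbour it gives only $\frac14$ to each $3$-neighbour: for $j\in\{6,7\}$ the bound $\frac{j}{4}<j-4$ already suffices, and for $j=5$ Lemma~\ref{4-5star} forces a $4^+$-neighbour, capping the number of $3$-neighbours at $4$ and the outflow at $1=j-4$.

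The delicate case, and the main obstacle, is the $4$-vertex, whose initial charge is $0$, so every unit of outflow must be matched by a documented receipt. The backbone is again that the $\frac12 b$ received in Step~1 balances the Step~2 outflow: writing $n_0,n_1,n_2,n_3$ for the cyclic neighbours of $v$, the Step~2 outflow equals $\frac14$ times the number, summed over the small $n_i$, of big vertices among $n_{i-1},n_{i+1}$, which is at most $2b$ since each big neighbour is counted at most twice. The remaining gifts are those of Steps~3 and~5, sent by a $4$-vertex $w$ to a neighbouring $3$-vertex; I would match each one to a receipt, arguing that the condition triggering Step~3 (that the face-neighbour $x$ gave to $w$ in Step~2) supplies a $\frac14$ receipt, and that the condition triggering Step~5 (that $vw$ borders a $5^+$-face) supplies a receipt of at least $\frac15$ from that face in Step~4. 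What makes this genuinely hard is the $v_1$-of-Configuration~\ref{config3} exception in Step~2, together with the risk that several such gifts compete for the same receipt; this is precisely where Configurations~\ref{config1}--\ref{config4} enter, since they forbid exactly the local structures around a $4$-vertex in which this bookkeeping would fail. Carrying out the final case split carefully---by the big-neighbour pattern of $v$, by whether $v$ plays the role of $v_1$ in Configuration~\ref{config3}, and by which of Steps~3 and~5 apply---is where essentially all the work lies.
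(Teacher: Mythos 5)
Your treatment of big vertices and of small $5^+$-vertices is correct and matches the paper's argument essentially line for line (charging each diagonal $\frac14$ to the big neighbour on that face, bounding the initial charge by $\frac14(j-1)$, and invoking Lemma~\ref{4-5star} for the $5$-vertex with no big neighbour). Your Step~1/Step~2 balance for a $4$-vertex is also sound, and the $v_1$ exception you worry about is in fact harmless: $v_1$ has the big neighbour $b_1$, so it receives $\frac12$ in Step~1 and sends out only $2\cdot\frac14$ in Step~2 under the modified rule.

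The gap is in the $4$-vertex case for Steps~3 and~5, which you announce as a plan ("I would match each one to a receipt") and then explicitly defer. The content of that matching is a multiplicity argument that you do not supply: for Step~3 you must check that the receipt from $x$ is not claimed by several faces at once, and the paper does this by observing that if $z$ is consecutive (as a neighbour of $x$) to exactly one big vertex then there is exactly one face $uvzx$ of the required type and $x$ gave at least $\frac14$, while if $z$ is consecutive to two big vertices then $x$ gave $\frac12$ and at most two such faces exist; for Step~5 you must rule out two gifts of $\frac15$ being charged to the same $5^+$-face $f$, which the paper does by noting that two such faces would force $z$ to have four $3$-neighbours, contradicting Lemma~\ref{4-5star}. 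Moreover, you point to the wrong tools for closing this: you assert that Configurations~\ref{config1}--\ref{config4} are what make the bookkeeping work, but none of those forbidden configurations is used in the paper's proof of Lemma~\ref{4+vertices} at all --- they are reserved for the $3$-vertex analysis in Lemma~\ref{3vertices}. The only structural lemma needed here is Lemma~\ref{4-5star}. As written, the hardest third of the proof is a sketch whose missing steps are exactly where the argument could fail, so the proposal is not yet a proof.
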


\begin{proof}
  Consider a $j$-vertex $z$ with $j \ge 4$. At the beginning, $c_0(z) = j-4 \ge 0$. We will show that $c_i(z) \ge 0$ for $i = 1,...,5$.

  \begin{itemize}
  \item
    Suppose $z$ is a big vertex. Such a vertex only loses charge in Step $1$. Since $j \ge 8$, we have $c_0(z) \ge \frac{j}{2}$. In Step $1$, vertex $z$ loses $\frac{1}{2}$ for each of its small neighbours, and at most $\frac{1}{2}$ for each of its big neighbours. Therefore it has more charge than what it gives, and thus it keeps a non-negative charge. 

  \item
    Suppose $z$ is a small $5^+$-vertex. It does not lose charge in Steps $1$, $3$, $4$ and $5$.

    Suppose $z$ has a big neighbour. It has at most $j-1$ small neighbours, and it has charge at least $\frac{1}{4}(j-1)$ at the beginning of the procedure, since $j \ge 5$. Moreover, it receives $\frac{1}{2}$ from each of its big neighbours in Step $1$. Therefore it does not give more charge that it has in Step~$2$.

    Suppose now that $z$ has no big neighbour. If $z$ is a $5$-vertex, then by Lemma~\ref{4-5star}, it has at most four $3$-vertices, and $c_2(z)\ge1-4\frac{1}{4}\ge 0$. If $z$ is a $6^+$-vertex, then $c_2(z)\ge j-4-j\frac{1}{4}\ge 0$.

  \item
    Suppose $z$ is a $4$-vertex. It does not lose charge in Steps $1$ and $4$. Suppose $z$ gives charge in Step $2$. Consider first that $z$ does not correspond to $v_1$ in Configuration~\ref{config3}. If $z$ is adjacent to a small vertex that is consecutive (as a neighbour of $z$) to two big neighbours, then $z$ gives at most twice $\frac{1}{2}$ in Step $2$ and received twice $\frac{1}{2}$ in Step $1$; hence $c_2(z) \ge 0$. Otherwise, $z$ gives at most twice $\frac{1}{4}$ in Step $2$, and received at least once $\frac{1}{2}$ in Step $1$; hence $c_2(z) \ge 0$. Let us now consider the case where $z$ corresponds to $v_1$ in Configuration~\ref{config3}. The vertex $z$ has a big neighbour that gave $\frac{1}{2}$ to $z$ in Step $1$, and $z$ gives $\frac{1}{4}$ to two of its neighbours in Step $2$. Therefore $z$ received in Step $1$ at least as much as what it gives in Step $2$.

    \begin{figure}[h]
      \begin{center}
        \begin{tikzpicture}
          \coordinate (u) at (0,0) ;
          \coordinate (x) at (2,0) ;
          \coordinate (v) at (0,-2) ;
          \coordinate (z) at (2,-2) ;

          \draw (u) node [above left] {$u$} ; 
          \draw (v) node [below left] {$v$} ; 
          \draw (z) node [below right] {$z$} ; 
          \draw (x) node [above right] {$x$} ; 

          \draw (u) -- (v);
          \draw (v) -- (z);
          \draw (z) -- (x);
          \draw (u) -- (x);
          \draw (v) -- (0,-2.65);
          \draw (z) -- (2.65,-2);
          \draw (z) -- (2,-2.65);

          \draw [fill=black] (v) circle (1.5pt) ; 
          \draw [fill=black] (z) circle (1.5pt) ;
          \draw [fill=white] (u) circle (4pt) ;
          \draw [fill=white] (x) circle (1.5pt) ;

        \end{tikzpicture}~~~~~~
        \begin{tikzpicture}
          \coordinate (u) at (0,0) ;
          \coordinate (u') at (4,0) ;
          \coordinate (x) at (2,0) ;
          \coordinate (v) at (0,-2) ;
          \coordinate (v') at (4,-2) ;
          \coordinate (z) at (2,-2) ;

          \draw (u) node [above left] {$u$} ; 
          \draw (v) node [below left] {$v$} ; 
          \draw (u') node [above left] {$u'$} ; 
          \draw (v') node [below left] {$v'$} ; 
          \draw (z) node [below right] {$z$} ; 
          \draw (x) node [above right] {$x$} ; 

          \draw (u) -- (v);
          \draw (u') -- (v');
          \draw (v) -- (z);
          \draw (v') -- (z);
          \draw (z) -- (x);
          \draw (u) -- (x);
          \draw (u') -- (x);
          \draw (v) -- (0,-2.65);
          \draw (v') -- (4,-2.65);
          \draw (z) -- (2,-2.65);

          \draw [fill=black] (v) circle (1.5pt) ;
          \draw [fill=black] (v') circle (1.5pt) ; 
          \draw [fill=black] (z) circle (1.5pt) ;
          \draw [fill=white] (u) circle (4pt) ;
          \draw [fill=white] (u') circle (4pt) ;
          \draw [fill=white] (x) circle (1.5pt) ;

        \end{tikzpicture}
      \end{center}
      \caption{Some configurations that appear in Lemma~\ref{4+vertices}.}\label{fig4+vertices}
    \end{figure}
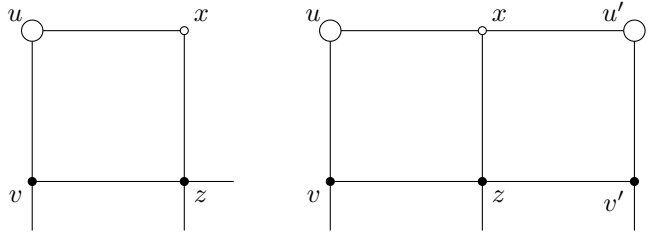

    Suppose $z$ gives charge in Step $3$. There is a $4$-face $uvzx$ with $u$ a big vertex, $v$ a $3$-vertex, and $x$ a small vertex such that $x$ gave charge to $z$ in Step $2$. Suppose $z$ is consecutive to exactly one big vertex (as neighbours of $x$). The vertex $x$ gave at least $\frac{1}{4}$ to $z$ in Step $2$, and there is exactly one such face with the same $z$ and $x$ (i.e. there is no pair $(u',v')$ distinct from $(u,v)$ that verifies the properties we stated for $(u,v)$)(see Figure~\ref{fig4+vertices}, left). Therefore $z$ can give $\frac{1}{4}$ to $v$ in Step $3$. Suppose $z$ is consecutive to exactly two big vertices (as neighbours of $x$). The vertex $x$ gave $\frac{1}{2}$ to $z$ in Step $2$, and there are at most two such faces with the same $z$ and $x$ (i.e. there is at most one pair $(u',v')$ distinct from $(u,v)$ that verifies the properties we stated for $(u,v)$) (see Figure~\ref{fig4+vertices}, right). Therefore $z$ can give $\frac{1}{4}$ to each of the corresponding $v$'s in Step $3$. Therefore $z$ received in Step $2$ at least as much as what it gives in Step $3$.

    Suppose $z$ gives charge in Step $5$. There is a $4$-face $uvzx$, with $u$ a big vertex, $v$ a $3$-vertex, and $x$ a $3$-vertex such that the other face, say $f$, that has $vz$ in its boundary is a $5^+$-face. Vertex $z$ received at least $\frac{1}{5}$ from $f$ in Step $4$, and it gives $\frac{1}{5}$ to $v$. There is a problem only if there is another $4$-face $u'v'zx'$, such that $vzv'$ is on the boundary of $f$, $u'$ is a big vertex, and $x'$ and $v'$ are $3$-vertices. But then $z$ would have four $3$-neighbours, contradicting Lemma~\ref{4-5star}. Therefore $z$ received in Step $4$ at least as much as what it gives in Step $5$.

  \end{itemize}

  In all cases, $z$ never has negative charge.
\end{proof}

\begin{lemm}\label{3vertices}
  At the end of the procedure, every $3$-vertex has non-negative charge.
\end{lemm}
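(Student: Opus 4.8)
The plan is to show that every $3$-vertex $t$, which starts with charge $c_0(t)=-1$, receives a total of at least $1$ during the procedure, so that $c_5(t)\ge 0$. By Lemma~\ref{3-b}, $t$ has at least one big neighbour, and each big neighbour donates $\frac12$ to $t$ in Step~1. Hence if $t$ has two or three big neighbours it already collects at least $2\cdot\frac12=1$ and we are done. The whole difficulty lies in the case where $t$ has exactly one big neighbour $b$ and two small neighbours $s,s'$: there $t$ has gathered only $\frac12$ so far and must extract another $\frac12$ from $s$, $s'$, and its three incident faces.

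Before the case analysis I would assemble the available sources of charge. Every small $5^+$-neighbour of $t$ gives it at least $\frac14$ in Step~2 (through the rule for small $5^+$-vertices having a big neighbour, or through the rule sending $\frac14$ to each $3$-neighbour when it has none). Two of the three faces incident to $t$ contain $b$ on their boundary, so each of those, if it is a $5^+$-face, hands $\frac14$ to $t$ in Step~4, while the third incident face gives at least $\frac15$ if it is a $5^+$-face. A $4$-vertex neighbour of $t$ can contribute $\frac14$ or $\frac12$ through Step~2 (when $t$ is consecutive, as its neighbour, to a big neighbour of that vertex), or relay $\frac14$ to $t$ through Step~3 and $\frac15$ through Step~5. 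Finally, the second clause of Step~1 gives $t$ an extra $\frac14$ whenever $t$ lies on a $4$-face carrying two adjacent big vertices.

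With this toolbox the argument splits on the degrees of $s$ and $s'$. If both are $5^+$-vertices they contribute $\frac14+\frac14$ and we are done. If exactly one is a $5^+$-vertex it supplies $\frac14$, and I would locate the remaining $\frac14$ among the incident $5^+$-faces through $b$ or via relaying from the other neighbour. The genuinely hard case, which I expect to be the main obstacle, is when \emph{both} remaining neighbours are $3$- or $4$-vertices while, simultaneously, the two faces incident to $t$ that contain $b$ are $4$-faces: then $t$ gets nothing from Step~4 through $b$, nothing from a $3$-neighbour, and must recover its missing $\frac12$ entirely from the relay Steps~3 and~5 through its $4$-vertex neighbours and from the second clause of Step~1.

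For that case the plan is to show that the deficient local pictures simply cannot occur, by matching each of them to a forbidden configuration. A $3$-neighbour of $t$ sitting on such a $4$-face reduces to the configuration ruled out by Lemma~\ref{config1}; a $4$-vertex neighbour that fails to relay enough charge reduces to Lemma~\ref{config2}; and the longer alternating chains of $3$- and $4$-vertices around two stacked $4$-faces are precisely Configuration~\ref{config3} extended as in Lemma~\ref{config4}. Verifying that every surviving deficient configuration is an instance of one of Lemmas~\ref{config1}, \ref{config2}, \ref{config4}, and that in all remaining configurations the relay steps do deliver the needed $\frac14+\frac14$, is the technical heart of the proof and the step requiring the most care, since it forces one to track exactly which of $t$'s $4$-vertex neighbours received charge in Step~2 (hence can relay in Step~3) and which of their incident faces are $5^+$-faces (hence trigger Step~5), while invoking Lemma~\ref{4-5star} to bound the number of $3$-neighbours of those $4$-vertices.
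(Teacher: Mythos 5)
Your opening moves match the paper exactly: the $3$-vertex $z$ starts at $-1$, collects $\frac12$ from a big neighbour guaranteed by Lemma~\ref{3-b}, is done if it has a second big neighbour, and otherwise must recover another $\frac12$ from its two small neighbours and its incident faces, with Lemmas~\ref{config1}, \ref{config2} and \ref{config4} ruling out the deficient local pictures. Your inventory of charge sources is also essentially correct. However, what you submit is a plan, not a proof: you explicitly defer ``verifying that every surviving deficient configuration is an instance of one of Lemmas~\ref{config1}, \ref{config2}, \ref{config4}, and that in all remaining configurations the relay steps do deliver the needed $\frac14+\frac14$'' as ``the technical heart of the proof''. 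That verification \emph{is} the proof of this lemma; without it nothing has been established. Concretely, you never carry out the chain of deductions that the paper performs on one side of $z$: that the face $f_0$ containing $x_0zb$ must be a $4$-face (else Step~4 gives $\frac14$), that its fourth vertex $y_0$ must be small (else the second clause of Step~1 gives $\frac14$), that $x_0$ must then be exactly a $4$-vertex (degree $3$ being killed by Lemma~\ref{config1}, degree $5^+$ by Step~2), that $y_0$ being a $3$-vertex forces, via Lemma~\ref{config2}, a big vertex at $a'$ and hence either Configuration~\ref{config3} (Step~2 pays) or a $5^+$-face $f$ (Steps~4 and~5 pay), and that $y_0$ being a $4^+$-vertex pays $x_0$ in Step~2 so that $x_0$ relays $\frac14$ in Step~3 --- with Lemma~\ref{config4} needed precisely when $y_0$ plays the role of $v_1$ in Configuration~\ref{config3}.

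A second, smaller gap: your accounting treats the two sides of $z$ independently and lists the middle face $f$ as a source for each, but $f$ can only pay $z$ once. The paper resolves this by bounding, for one side, the charge from $x_0$, $y_0$ and $f_0$ \emph{plus half} the charge from $f$ by $\frac14$, and then doubling by symmetry; this halving is exactly what makes the subcase $\frac1{10}+\frac15\ge\frac14$ (a $\frac15$ from Step~5 plus half of $f$'s $\frac15$ from Step~4) close. Your sketch does not specify any such bookkeeping, so even if you filled in the case analysis you would need to add a mechanism preventing double-counting of $f$.
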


\begin{proof}
  Let $z$ be a $3$-vertex. It never loses charge in the procedure, so we only need to prove that it received at least $1$ over the whole procedure. Assume by contradiction that it received less than that.

  By Lemma~\ref{3-b}, vertex $z$ has at least one big neighbour $b$. Let $x_0$ and $x_1$ be its two other neighbours. Vertex $b$ gives $\frac{1}{2}$ to $z$ in Step $1$, so $z$ only needs to receive $\frac{1}{2}$ from $x_0$, $x_1$, and its surrounding faces. In particular, if one of the $x_i$ is a big vertex, then it gives $\frac{1}{2}$ to $z$ in Step $1$, and $z$ receives all the charge it needs, a contradiction. Therefore $x_0$ and $x_1$ are small vertices.

  Let $f$ be the face that contains $x_0zx_1$ in its boundary, $f_0$ be the face that contains $x_0zb$ in its boundary and $f_1$ the face that contains $x_1zb$ in its boundary. Let $y_0$ and $y_1$ be such that $bzx_0y_0$ and $bzx_1y_1$ are $4$-paths that are in the boundaries of $f_0$ and $f_1$ respectively. Let us count the charge that $x_0$, $y_0$, and $f_0$ give to $z$ plus half the charge that $f$ gives to $z$. If we show that this sum is at least $\frac{1}{4}$, then by symmetry we will know that $z$ received at least $\frac{1}{2}$ from $x_0$, $x_1$, $y_0$, $y_1$, and the faces $f$, $f_0$, and $f_1$, and that leads to a contradiction.

  Observe that $f_0$ is a $4$-face. If it is a $5^+$-face, then since it has the big vertex $b$ in its boundary, it gives $\frac{1}{4}$ to $z$ in Step $4$, a contradiction.

  Observe that $y_0$ is a small vertex. If $y_0$ is a big vertex, then $y_0$ gives $\frac{1}{4}$ to $z$ in Step $1$, a contradiction. 
  See Figure~\ref{figfinal} for a representation of the vertices we know.

  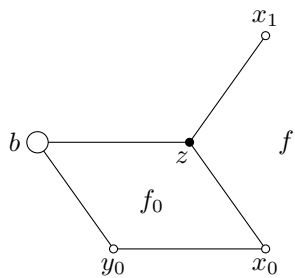
\begin{figure}[h]
    \begin{center}
      \begin{tikzpicture}
        \coordinate (z) at (0,0);
        \coordinate (b) at (-2,0);
        \coordinate (x0) at (1,-1.414);
        \coordinate (x1) at (1,1.414);
        \coordinate (y0) at (-1,-1.414);

        \draw (-0.1,0) node [below] {$z$} ;
        \draw (-2.1,0) node [left] {$b$} ;
        \draw (x0) node [below] {$x_0$} ;
        \draw (x1) node [above] {$x_1$} ;
        \draw (y0) node [below] {$y_0$} ;
        \draw (-0.5,-0.5) node [below] {$f_0$} ;
        \draw (1.5,0) node [left] {$f$} ;

        \draw (b) -- (z);
        \draw (x0) -- (z);
        \draw (x1) -- (z);
        \draw (b) -- (y0);
        \draw (y0) -- (x0);

        \draw [fill=black](z) circle (1.5pt) ;
        \draw [fill=white](b) circle (4pt) ;
        \draw [fill=white](x0) circle (1.5pt) ;
        \draw [fill=white](x1) circle (1.5pt) ;
        \draw [fill=white](y0) circle (1.5pt) ;

      \end{tikzpicture}
    \end{center}
    \caption{The face $f_0$ and the vertex $x_1$.}\label{figfinal}
  \end{figure}

  Observe that $x_0$ has degree $4$. Suppose $x_0$ is a $5^+$-vertex. It gives at least $\frac{1}{4}$ to $z$ in Step $2$, a contradiction. Suppose $x_0$ is a $3$-vertex. Then $x_0$ has a big neighbour by Lemma~\ref{3-b}, and it cannot be $y_0$. This contradicts Lemma~\ref{config1}.

  Let $a$ and $a'$ be the neighbours of $x_0$ distinct from $z$ and $y_0$,
  such that $a$ is consecutive to $z$(as a neighbour of $x_0$). Suppose $a$ is a big vertex. If $x_0$ does not correspond to $v_1$ in Configuration~\ref{config3}, then $x_0$ gives $\frac{1}{4}$ to $z$ in Step $2$. If $x_0$ corresponds to $v_1$ in Configuration~\ref{config3}, then $z$ corresponds to $w_0$ that is not adjacent to two big vertices, so $x_0$ also gives $\frac{1}{4}$ to $z$ in Step $2$. Therefore $a$ is a small vertex.

  Observe that $y_0$ is a $4^+$-vertex. Suppose $y_0$ is a $3$-vertex.  By Lemma~\ref{config2}, there is at least one big vertex in $\{a,a'\}$, which has to be $a'$. If $f$ is a $4$-face, then $x_0$ corresponds to $v_1$ in Configuration~\ref{config3}, and it gives $\frac{1}{4}$ to $z$ in Step $2$. Therefore $f$ is a $5^+$-face, and it gives at least $\frac{1}{5}$ to $z$ in Step $4$, and $x_0$ gives $\frac{1}{5}$ to $z$ in Step $5$. As $\frac{1}{10} + \frac{1}{5} \ge \frac{1}{4}$, this leads to a contradiction.

  \begin{figure}[h!]
    \begin{center}
      \begin{tikzpicture}
        \coordinate (b0) at (0,0);
        \coordinate (w0) at (1,-1);
        \coordinate (w0') at (1.5,-1.5);
        \coordinate (v0) at (1,1);
        \coordinate (v1) at (2,0);
        \coordinate (v2) at (3,1);
        \coordinate (v2') at (4,2);
        \coordinate (v3) at (2,2);
        \coordinate (b1) at (3,-1);
        \coordinate (w1) at (4,0);
        \coordinate (w1') at (5,1);

        \draw (-0.1,0) node [left] {$b_0$} ;
        \draw (w0) node [left] {$w_0$} ;
        \draw (v0) node [left] {$v_0$} ;
        \draw (v1) node [right] {$y_0 = v_1$} ;
        \draw (v2) node [right] {$x_0 = v_2$} ;
        \draw (v3) node [above] {$a' = v_3$} ;
        \draw (3.1,-1) node [right] {$b = b_1$} ;
        \draw (w1) node [right] {$z = w_1$} ;
        \draw (w1') node [right] {$x_1$} ;
        \draw (v2') node [right] {$a$} ;

        \draw (b0) -- (w0);
        \draw (v1) -- (w0);
        \draw (b0) -- (v0);
        \draw (b1) -- (v1);
        \draw (b1) -- (w1);
        \draw (v0) -- (v1);
        \draw (v1) -- (v2);
        \draw (v2) -- (v3);
        \draw (v0) -- (v3);
        \draw (w0) -- (w0');
        \draw (v2) -- (v2');
        \draw (v2) -- (w1);
        \draw (w1) -- (w1');

        \draw [fill=white](b0) circle (4pt) ;
        \draw [fill=black](w0) circle (1.5pt) ;
        \draw [fill=black](v0) circle (1.5pt) ;
        \draw [fill=black](v1) circle (1.5pt) ;
        \draw [fill=black](v2) circle (1.5pt) ;
        \draw [fill=white](v2') circle (1.5pt) ;
        \draw [fill=white](v3) circle (1.5pt) ;
        \draw [fill=white](b1) circle (4pt) ;
        \draw [fill=black](w1) circle (1.5pt) ;
        \draw [fill=white](w1') circle (1.5pt) ;

      \end{tikzpicture}
    \end{center}
    \caption{The case in Lemma~\ref{3vertices} where $y_0$ corresponds to $v_1$ in Configuration~\ref{config3}.\label{fig3vertices}}
  \end{figure}
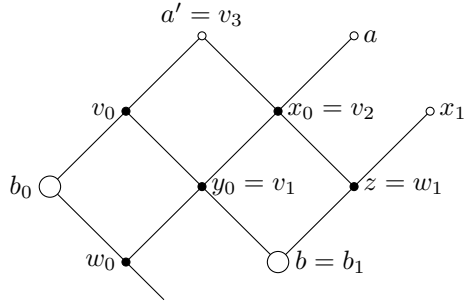

  Suppose first that $y_0$ corresponds to $v_1$ in Configuration~\ref{config3}. See Figure~\ref{fig3vertices} for an illustration of the vertices we know, and of the correspondence with vertices of Configuration~\ref{config3}. By Lemma~\ref{config4}, the third neighbour of $w_0$ is big. Therefore $y_0$ gives $\frac{1}{4}$ to $x_0$ in Step $2$. It follow that $x_0$ gives $\frac{1}{4}$ to $z$ in Step $3$, a contradiction.

  Now $y_0$ does not correspond to $v_1$ in Configuration~\ref{config3}. Vertex $y_0$ gives $\frac{1}{4}$ to $x_0$ in Step $2$, since $x_0$ is a neighbour of $y_0$ consecutive (as a neighbour of $y_0$) to a big neighbour. Therefore $x_0$ gives $\frac{1}{4}$ to $z$ in Step $3$, a contradiction.
\end{proof}

Lemmas~\ref{faces}--\ref{3vertices} conclude the proof of Theorem~\ref{main}.

\section{NP-completeness} \label{complexity}
By Theorem~\ref{main}, there exists a smallest integer $d_0 \le 5$ such that every triangle-free planar graph has an $({\cal F},{\cal F}_{d_0})$-partition. For all $d \ge d_0$, every triangle-free planar graph has an $({\cal F},{\cal F}_{d})$-partition. Let us assume that $d_0 \ge 1$.

In this section, for a fixed $d$ we consider the complexity of the following problem $P_d$: given a triangle-free planar graph $G$, does $G$ have an $({\cal F},{\cal F}_d)$-partition? This can be answered positively in constant time for $d \ge d_0$. However, we prove the following:

\begin{theo}
  For $d < d_0$, the problem $P_d$ is NP-complete.
\end{theo}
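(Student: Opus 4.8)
The plan is to first observe that $P_d$ lies in NP --- given a candidate bipartition $(F,D)$ of the vertex set, one checks in linear time that $G[F]$ is acyclic, that $G[D]$ is acyclic, and that every vertex of $G[D]$ has degree at most $d$ --- and then to establish NP-hardness by a polynomial reduction from \textsc{Planar 3-Sat}, which is known to be NP-complete.

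The engine driving the reduction is the hypothesis $d < d_0$: by definition of $d_0$ this guarantees the existence of a triangle-free planar graph admitting no $({\cal F},{\cal F}_d)$-partition, and I would fix one, say $N$, of minimum order. Minimality gives the rigidity we need: for every vertex $v$ of $N$, the graph $N - v$ does admit an $({\cal F},{\cal F}_d)$-partition, yet no such partition extends to $N$, so in every partition of $N-v$ the vertex $v$ can be added neither to $F$ (adding it would close a cycle in $F$) nor to $D$ (adding it would either close a cycle in $D$ or push a neighbour's degree above $d$). Thus a copy of $N$ with one vertex deleted, glued to the rest of the construction through the neighbourhood of that missing vertex, behaves as a hard local constraint: it admits a partition only when its terminal vertices avoid the forbidden joint states. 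From such pieces I would assemble constraint gadgets whose terminals are \emph{forced} into prescribed states (a ``forced-$F$'' terminal, a ``forced-$D$'' terminal, and a ``saturated-$D$'' terminal whose $D$-degree is already $d$, so that it refuses any further $D$-neighbour).

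Using these primitives I would build the three standard families of gadgets: a \emph{variable gadget} admitting, up to the state of its output terminals, exactly two $({\cal F},{\cal F}_d)$-partitions, encoding the values \textsc{true} and \textsc{false}; \emph{wire} gadgets that copy a terminal's state along a path, together with a negating variant for negative literals; and a \emph{clause gadget} on three incoming wires that extends to a valid partition if and only if at least one of its literals is set to \textsc{true}. Laying these gadgets out along a plane embedding of the input formula (variables, clauses, and variable--clause incidences can be drawn without crossings precisely because the formula is planar) and routing the wires through the faces of that embedding yields a single triangle-free plane graph $G_\varphi$ of size polynomial in the formula $\varphi$, such that $G_\varphi$ admits an $({\cal F},{\cal F}_d)$-partition if and only if $\varphi$ is satisfiable.

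The main obstacle, and the bulk of the work, is the explicit design of these gadgets under three simultaneous constraints: the whole graph must stay \emph{triangle-free}, it must stay \emph{planar}, and each gadget must realise exactly its intended logical behaviour. The delicate point is that $N$ is only available as a black box --- all we know is that it exists --- so the forcing must be derived solely from the abstract ``$v$ is unplaceable in $N-v$'' property, and the interfaces where gadgets meet (and where wires join variables to clauses) must be engineered so as never to create a triangle nor to demand a crossing. Verifying that every junction is triangle-free, that the whole layout embeds in the plane following the drawing of $\varphi$, and that the state propagation is tight in both directions (satisfying assignment $\Rightarrow$ partition, and partition $\Rightarrow$ satisfying assignment) is the crux of the argument; once the gadgets are in hand, the equivalence and the polynomial size bound are routine, completing the proof that $P_d$ is NP-complete for every $d < d_0$.
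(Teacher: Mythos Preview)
Your high-level strategy is right --- membership in NP is immediate, and hardness should come from \textsc{Planar 3-Sat} via gadgets built out of a minimal counter-example --- but the proposal stops exactly where the proof begins. You explicitly defer the construction of the forcing, variable, wire, and clause gadgets, calling it ``the bulk of the work'', and indeed it is the \emph{entire} content of the argument. As written this is a plan, not a proof: nothing guarantees that a forced-$F$ terminal, a forced-$D$ terminal, or a clause gadget can actually be assembled from the black-box primitive you propose, and you have not shown how.

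Two of your technical choices make the deferred work harder than necessary, and comparing with the paper is instructive. First, you delete a \emph{vertex} from a minimum-order counter-example; the paper instead takes a counter-example minimal first in vertices and then in edges, and deletes an \emph{edge} $uv$. Edge deletion yields a clean two-terminal primitive (an ``anti-edge''): in every $({\cal F},{\cal F}_d)$-partition of $G-uv$, the endpoints $u,v$ lie in the same part, and if in $F$ they are joined by an $F$-path; moreover $u,v$ are at distance at least $3$. Two terminals with one crisp constraint are easy to wire up while staying planar and triangle-free; your vertex-deletion primitive has an unknown number of terminals obeying an unknown joint constraint, and it is not at all clear how to extract a forced-$F$ vertex from it. The paper still needs a case split you did not anticipate --- either $u,v$ lie in $D$ in every partition of $G-uv$, or some partition puts them in $F$ --- and builds the forced-$F$ gadget $H$ differently in each case. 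Second, you reduce directly to $P_d$; the paper reduces first from \textsc{Planar 3-Sat} to $P_0$ (where a clause gadget is just a $4$-cycle and the variable gadget a short path decorated with anti-edges), and then from $P_0$ to $P_d$ by hanging $d$ forced-$D$ gadgets off every vertex. This two-stage route sidesteps the need for degree-aware clause gadgets entirely.
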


The problem is clearly in NP, since checking that a graph is acyclic and/or has degree at most $d$ can be done in polynomial time. Let us show that the problem is NP-hard.

Let $G$ be a counter-example to the property that every triangle-free planar graph admits an $({\cal F},{\cal F}_d)$ partition. We consider such a $G$ with minimum number of vertices, and with minimum number of edges among the counter-examples with minimum number of vertices. Let $e = uv$ be an edge of $G$, and $G' = G - e$. By minimality of $G$, $G'$ admits an $({\cal F},{\cal F}_d)$-partition. In such a partition $(F,D)$, $u$ and $v$ are either both in $F$ or both in $D$, and if they are in $F$, then there is a path from $u$ to $v$ in $G'[F]$ (otherwise it would be an $({\cal F},{\cal F}_d)$-partition of $G$). Observe that in $G'$, $u$ and $v$ are at distance at least $3$, since $G$ is triangle-free. We call a copy of $G'$ an anti-edge $uv$.

We want to make a gadget $H$ with a vertex $x$ that admits an $({\cal F},{\cal F}_d)$-partition, and such that $x$ is in $F$ for all $({\cal F},{\cal F}_d)$-partition $(F,D)$ of $H$.

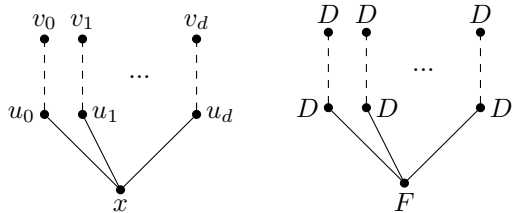
\begin{figure}[h]
  \begin{center}
    \begin{tikzpicture}
      \coordinate (x) at (0,0);
      \coordinate (u0) at (-1,1);
      \coordinate (u1) at (-0.5,1);
      \coordinate (ud) at (1,1);
      \coordinate (v0) at (-1,2);
      \coordinate (v1) at (-0.5,2);
      \coordinate (vd) at (1,2);

      \draw (x) node [below] {$x$} ;
      \draw (u0) node [left] {$u_0$} ;
      \draw (u1) node [right] {$u_1$} ;
      \draw (ud) node [right] {$u_d$} ;
      \draw (v0) node [above] {$v_0$} ;
      \draw (v1) node [above] {$v_1$} ;
      \draw (vd) node [above] {$v_d$} ;
      \draw (0.25,1.5) node {$...$} ;

      \draw (x) -- (u0);
      \draw (x) -- (u1);
      \draw (x) -- (ud);
      \draw [dashed] (v0) -- (u0);
      \draw [dashed] (v1) -- (u1);
      \draw [dashed] (vd) -- (ud);

      \draw [fill=black](x) circle (1.5pt) ;
      \draw [fill=black](u0) circle (1.5pt) ;
      \draw [fill=black](u1) circle (1.5pt) ;
      \draw [fill=black](ud) circle (1.5pt) ;
      \draw [fill=black](v0) circle (1.5pt) ;
      \draw [fill=black](v1) circle (1.5pt) ;
      \draw [fill=black](vd) circle (1.5pt) ;

    \end{tikzpicture}~~~~
    \begin{tikzpicture}
      \coordinate (x) at (0,0);
      \coordinate (u0) at (-1,1);
      \coordinate (u1) at (-0.5,1);
      \coordinate (ud) at (1,1);
      \coordinate (v0) at (-1,2);
      \coordinate (v1) at (-0.5,2);
      \coordinate (vd) at (1,2);

      \draw (x) node [below] {$F$} ;
      \draw (u0) node [left] {$D$} ;
      \draw (u1) node [right] {$D$} ;
      \draw (ud) node [right] {$D$} ;
      \draw (v0) node [above] {$D$} ;
      \draw (v1) node [above] {$D$} ;
      \draw (vd) node [above] {$D$} ;
      \draw (0.25,1.5) node {$...$} ;

      \draw (x) -- (u0);
      \draw (x) -- (u1);
      \draw (x) -- (ud);
      \draw [dashed] (v0) -- (u0);
      \draw [dashed] (v1) -- (u1);
      \draw [dashed] (vd) -- (ud);

      \draw [fill=black](x) circle (1.5pt) ;
      \draw [fill=black](u0) circle (1.5pt) ;
      \draw [fill=black](u1) circle (1.5pt) ;
      \draw [fill=black](ud) circle (1.5pt) ;
      \draw [fill=black](v0) circle (1.5pt) ;
      \draw [fill=black](v1) circle (1.5pt) ;
      \draw [fill=black](vd) circle (1.5pt) ;

    \end{tikzpicture}
  \end{center}
  \caption{The gadget $H$ in Case~\ref{gadgetH1}, and an $({\cal F},{\cal F}_d)$-partition. Dashed lines are anti-edges.}\label{figgadgetH1}
\end{figure}

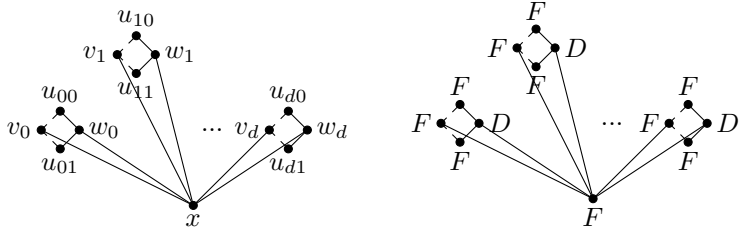
\begin{figure}
  \begin{center}

    \begin{tikzpicture}
      \coordinate (x) at (0,0);
      \coordinate (v0) at (-2,1);
      \coordinate (v1) at (-1,2);
      \coordinate (vd) at (1,1);
      \coordinate (w0) at (-1.5,1);
      \coordinate (w1) at (-0.5,2);
      \coordinate (wd) at (1.5,1);
      \coordinate (u00) at (-1.75,1.25);
      \coordinate (u01) at (-0.75,2.25);
      \coordinate (u0d) at (1.25,1.25);
      \coordinate (u10) at (-1.75,0.75);
      \coordinate (u11) at (-0.75,1.75);
      \coordinate (u1d) at (1.25,0.75);

      \draw (x) node [below] {$x$} ;
      \draw (u00) node [above] {$u_{00}$} ;
      \draw (u01) node [above] {$u_{10}$} ;
      \draw (u0d) node [above] {$u_{d0}$} ;
      \draw (u10) node [below] {$u_{01}$} ;
      \draw (u11) node [below] {$u_{11}$} ;
      \draw (u1d) node [below] {$u_{d1}$} ;
      \draw (v0) node [left] {$v_0$} ;
      \draw (v1) node [left] {$v_1$} ;
      \draw (vd) node [left] {$v_d$} ;
      \draw (w0) node [right] {$w_0$} ;
      \draw (w1) node [right] {$w_1$} ;
      \draw (wd) node [right] {$w_d$} ;
      \draw (0.25,1) node {$...$} ;

      \draw (x) -- (v0);
      \draw (x) -- (v1);
      \draw (x) -- (vd);
      \draw (x) -- (w0);
      \draw (x) -- (w1);
      \draw (x) -- (wd);
      \draw [dashed] (v0) -- (u00);
      \draw [dashed] (v1) -- (u01);
      \draw [dashed] (vd) -- (u0d);
      \draw [dashed] (v0) -- (u10);
      \draw [dashed] (v1) -- (u11);
      \draw [dashed] (vd) -- (u1d);
      \draw (u00) -- (w0);
      \draw (u01) -- (w1);
      \draw (u0d) -- (wd);
      \draw (u10) -- (w0);
      \draw (u11) -- (w1);
      \draw (u1d) -- (wd);

      \draw [fill=black](x) circle (1.5pt) ;
      \draw [fill=black](v0) circle (1.5pt) ;
      \draw [fill=black](v1) circle (1.5pt) ;
      \draw [fill=black](vd) circle (1.5pt) ;
      \draw [fill=black](w0) circle (1.5pt) ;
      \draw [fill=black](w1) circle (1.5pt) ;
      \draw [fill=black](wd) circle (1.5pt) ;
      \draw [fill=black](u00) circle (1.5pt) ;
      \draw [fill=black](u01) circle (1.5pt) ;
      \draw [fill=black](u0d) circle (1.5pt) ;
      \draw [fill=black](u10) circle (1.5pt) ;
      \draw [fill=black](u11) circle (1.5pt) ;
      \draw [fill=black](u1d) circle (1.5pt) ;

    \end{tikzpicture}~~~~
    \begin{tikzpicture}
      \coordinate (x) at (0,0);
      \coordinate (v0) at (-2,1);
      \coordinate (v1) at (-1,2);
      \coordinate (vd) at (1,1);
      \coordinate (w0) at (-1.5,1);
      \coordinate (w1) at (-0.5,2);
      \coordinate (wd) at (1.5,1);
      \coordinate (u00) at (-1.75,1.25);
      \coordinate (u01) at (-0.75,2.25);
      \coordinate (u0d) at (1.25,1.25);
      \coordinate (u10) at (-1.75,0.75);
      \coordinate (u11) at (-0.75,1.75);
      \coordinate (u1d) at (1.25,0.75);

      \draw (x) node [below] {$F$} ;
      \draw (u00) node [above] {$F$} ;
      \draw (u01) node [above] {$F$} ;
      \draw (u0d) node [above] {$F$} ;
      \draw (u10) node [below] {$F$} ;
      \draw (u11) node [below] {$F$} ;
      \draw (u1d) node [below] {$F$} ;
      \draw (v0) node [left] {$F$} ;
      \draw (v1) node [left] {$F$} ;
      \draw (vd) node [left] {$F$} ;
      \draw (w0) node [right] {$D$} ;
      \draw (w1) node [right] {$D$} ;
      \draw (wd) node [right] {$D$} ;
      \draw (0.25,1) node {$...$} ;

      \draw (x) -- (v0);
      \draw (x) -- (v1);
      \draw (x) -- (vd);
      \draw (x) -- (w0);
      \draw (x) -- (w1);
      \draw (x) -- (wd);
      \draw [dashed] (v0) -- (u00);
      \draw [dashed] (v1) -- (u01);
      \draw [dashed] (vd) -- (u0d);
      \draw [dashed] (v0) -- (u10);
      \draw [dashed] (v1) -- (u11);
      \draw [dashed] (vd) -- (u1d);
      \draw (u00) -- (w0);
      \draw (u01) -- (w1);
      \draw (u0d) -- (wd);
      \draw (u10) -- (w0);
      \draw (u11) -- (w1);
      \draw (u1d) -- (wd);

      \draw [fill=black](x) circle (1.5pt) ;
      \draw [fill=black](v0) circle (1.5pt) ;
      \draw [fill=black](v1) circle (1.5pt) ;
      \draw [fill=black](vd) circle (1.5pt) ;
      \draw [fill=black](w0) circle (1.5pt) ;
      \draw [fill=black](w1) circle (1.5pt) ;
      \draw [fill=black](wd) circle (1.5pt) ;
      \draw [fill=black](u00) circle (1.5pt) ;
      \draw [fill=black](u01) circle (1.5pt) ;
      \draw [fill=black](u0d) circle (1.5pt) ;
      \draw [fill=black](u10) circle (1.5pt) ;
      \draw [fill=black](u11) circle (1.5pt) ;
      \draw [fill=black](u1d) circle (1.5pt) ;

    \end{tikzpicture}
  \end{center}
  \caption{The gadget $H$ in Case~\ref{gadgetH2}, and an $({\cal F},{\cal F}_d)$-partition.}\label{figgadgetH2}
\end{figure}

We construct $H$ as follows:

\begin{enumerate}
\item \label{gadgetH1}
  Suppose for all $({\cal F},{\cal F}_d)$-partition $(F,D)$ of $G'$, $u$ and $v$ are in $D$. See Figure~\ref{figgadgetH1} for an illustration of the construction of $H$ and an $({\cal F},{\cal F}_d)$-partition of $H$ in this case. Take $d+1$ copies of $G'$, called $G'_0$, ..., $G'_d$, and add a new vertex $x$ adjacent to each copy of $u$. Consider an $({\cal F},{\cal F}_d)$-partition $(F,D)$ of $G'$. This leads to an $({\cal F},{\cal F}_d)$-partition $(F_i,D_i)$ of each $G_i$, and $(\bigcup_iF_i \cup \{x\}, \bigcup_iD_i)$ is an $({\cal F},{\cal F}_d)$-partition of $H$. 

  Let us now prove that for any $({\cal F},{\cal F}_d)$-partition $(F,D)$ of $H$, $x$ belongs to $F$. For any $({\cal F},{\cal F}_d)$-partition $(F,D)$ of $H$, if $x \in D$, then there exists a $u_i$ that is in $F$, so the corresponding $G'_i$ admits an $({\cal F},{\cal F}_d)$-partition with $u_i \in F$, a contradiction.

\item \label{gadgetH2}
  Suppose there exists an $({\cal F},{\cal F}_d)$-partition $(F,D)$ of $G'$ such that $u$ and $v$ are in $F$. See Figure~\ref{figgadgetH2} for an illustration of the construction of $H$ and an $({\cal F},{\cal F}_d)$-partition of $H$ in this case. We construct $H$ as follows.
  Consider a vertex $x$. We add new vertices $v_0,...,v_d$ and $w_0,...,w_d$ to the graph, adjacent to $x$. Then for $0 \le i \le d$ and $0 \le j \le 1$, we add a new vertex $u_{ij}$, the anti-edge $v_iu_{ij}$, and the edge $u_{ij}w_i$.

  Graph $H$ admits an $({\cal F},{\cal F}_d)$-partition. Indeed, consider an $({\cal F},{\cal F}_d)$-partition of $G'$ with $u$ and $v$ in $F$, and apply it to every anti-edge of $H$ (as before, we take the union of the $F_i$ and the union of the $D_i$). Then the $v_i$ and $u_{ij}$ are all in $F$. Add all the $w_i$ to $D$. Add $x$ to $F$. We then have an $({\cal F},{\cal F}_d)$-partition of $H$.

  Let us now prove that for any $({\cal F},{\cal F}_d)$-partition $(F,D)$ of $H$, $x$ belongs to $F$. For any $({\cal F},{\cal F}_d)$-partition $(F,D)$ of $H$, if $x \in D$, then there exists an $i$ such that $v_i$ and $w_i$ are in $F$, thus $u_{i0}$ and $u_{i1}$ are in $F$, so there is a cycle in $H[F]$, a contradiction.
\end{enumerate}

\begin{figure}[h]
  \begin{center}
    \begin{tikzpicture}
      \coordinate (y) at (0,0);
      \coordinate (x1) at (-1,1);
      \coordinate (x2) at (0,2);
      \coordinate (x3) at (1,1);

      \draw (y) node [below] {$y$} ;
      \draw (y) node [left] {$D$} ;
      \draw (x1) node [above] {$F$} ;
      \draw (x2) node [left] {$F$} ;
      \draw (x3) node [above] {$F$} ;
      \draw (-2,1) node [left] {$H$} ;
      \draw (0,3) node [above] {$H$} ;
      \draw (2,1) node [right] {$H$} ;

      \draw (y) -- (x1);
      \draw (y) -- (x3);
      \draw (x2) -- (x1);
      \draw (x3) -- (x2);
      \draw [-,>=latex] (x1) to[out = 190, in = -90] ++(-1,0) to[out = 90, in = 170] ++(1,0);
      \draw [-,>=latex] (x2) to[out = 100, in = 180] ++(0,1) to[out = 0, in = 80] ++(0,-1);
      \draw [-,>=latex] (x3) to[out = 10, in = 90] ++(1,0) to[out = -90, in = -10] ++(-1,0);

      \draw [fill=black](y) circle (1.5pt) ;
      \draw [fill=black](x1) circle (1.5pt) ;
      \draw [fill=black](x2) circle (1.5pt) ;
      \draw [fill=black](x3) circle (1.5pt) ;

    \end{tikzpicture}
  \end{center}
  \caption{The gadget $H'$ with an $({\cal F},{\cal F}_d)$-partition.}\label{figgadgetH'}
\end{figure}
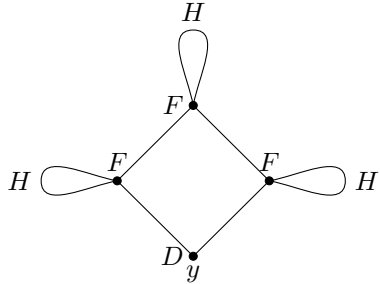

Observe that we can make a gadget $H'$ with a vertex $y$ that admits an $({\cal F},{\cal F}_d)$-partition, and such that $y$ is in $D$ for all $({\cal F},{\cal F}_d)$-partition $(F,D)$ of $H'$ (see Figure~\ref{figgadgetH'}): we take three copies of $H$, and make a $4$-cycle with the corresponding copies of $x$ and a new vertex $y$. Taking an $({\cal F},{\cal F}_d)$-partition of $H$ for each copy of $H$, and adding $y$ to $D$ leads to an $({\cal F},{\cal F}_d)$-partition of $H'$. Conversely, in an $({\cal F},{\cal F}_d)$-partition $(F,D)$ of $H'$, all the copies of $x$ are in $F$, so $y$ is in $D$.

We will first make a reduction from the problem {\sc Planar $3$-sat} to $P_0$, and then from $P_0$ to $P_d$ with $d < d_0$.

\subsection*{First reduction: from {\sc Planar $3$-sat} to $P_0$}
Here we will use the gadget $H$ for $d = 0$.

Consider an instance $I$ of {\sc Planar $3$-sat}. The instance $I$ is a boolean formula in conjunctive normal form, associated to a planar graph $G_I$. For each clause $C$ of $I$ with variables $x$, $y$ and $z$, we make a $4$-cycle $K_C = x_Cy_Cz_Ca_C$. For each variable $x$ that appears $k_x$ times in the formula, we make the following gadget $G_x$ a path $p_{x,0}...p_{x,2k_x-1}$, and for all $i \in [0,2k_x-2]$ we add two adjacent vertices, $q_{x,i}$ and $r_{x,i+1}$, adjacent to $p_{x,i}$ and $p_{x,i+1}$ respectively (see Figure~\ref{figGX}). We then add a copy of $H$ for each clause $C$ such that $a_C$ corresponds to the vertex $x$ of $H$, and a copy of $H$ for each $q_{x,i}$ and each $r_{x,i}$ such that $q_{x,i}$ and $r_{x,i}$ respectively correspond to the vertex $x$ of $H$. Then for every clause $C$ and every variable $x$ that appears in $C$, we add an edge from $x_C$ to a $p_{x,i}$, with an even $i$ if the literal associated to $x$ in $C$ is a positive literal and an odd $i$ otherwise, such that no two $x_C$ are adjacent to the same $p_{x,i}$ (see Figure~\ref{figKC}). It is possible to do so without breaking planarity, since the graph $G_I$ is planar. We call $G'_I$ the graph we obtain.

\begin{figure}[h]
  \begin{center}
    \begin{tikzpicture}
      \coordinate (a) at (0,0);
      \coordinate (x) at (-1,1);
      \coordinate (y) at (0,2);
      \coordinate (z) at (1,1);

      \node[draw,ellipse] (X) at (-2,3) {$G_x$};
      \node[draw,ellipse] (Y) at (0,3) {$G_y$};
      \node[draw,ellipse] (Z) at (2,3) {$G_z$};

      \draw (x) node [below left] {$x_C$} ;
      \draw (y) node [left] {$y_C$} ;
      \draw (z) node [below right] {$z_C$} ;
      \draw (a) node [left] {$a_C$} ;
      \draw (-0.2,-0.5) node [left] {$H$} ;

      \draw (a) -- (x);
      \draw (a) -- (z);
      \draw (y) -- (x);
      \draw (z) -- (y);
      \draw [-,>=latex] (a) to[out = 290, in = -10] ++(0,-1) to[out = 180, in = 250] ++(0,1);
      \draw (x) -- (X);
      \draw (y) -- (Y);
      \draw (z) -- (Z);

      \draw [fill=black](a) circle (1.5pt) ;
      \draw [fill=black](x) circle (1.5pt) ;
      \draw [fill=black](y) circle (1.5pt) ;
      \draw [fill=black](z) circle (1.5pt) ;

    \end{tikzpicture}~~~~
    \begin{tikzpicture}
      \coordinate (a) at (0,0);
      \coordinate (x) at (-1,1);
      \coordinate (y) at (0,2);
      \coordinate (z) at (1,1);

      \node[draw,ellipse] (X) at (-2,3) {$G_x$};
      \node[draw,ellipse] (Y) at (0,3) {$G_y$};
      \node[draw,ellipse] (Z) at (2,3) {$G_z$};

      \draw (x) node [below left] {$D$} ;
      \draw (y) node [left] {$F$} ;
      \draw (z) node [below right] {$F$} ;
      \draw (a) node [left] {$F$} ;
      \draw (-0.2,-0.5) node [left] {$H$} ;

      \draw (a) -- (x);
      \draw (a) -- (z);
      \draw (y) -- (x);
      \draw (z) -- (y);
      \draw [-,>=latex] (a) to[out = 290, in = -10] ++(0,-1) to[out = 180, in = 250] ++(0,1);
      \draw (x) -- (X);
      \draw (y) -- (Y);
      \draw (z) -- (Z);

      \draw [fill=black](a) circle (1.5pt) ;
      \draw [fill=black](x) circle (1.5pt) ;
      \draw [fill=black](y) circle (1.5pt) ;
      \draw [fill=black](z) circle (1.5pt) ;

    \end{tikzpicture}
  \end{center}
  \caption{The cycle $K_C$ of a clause $C$ with variables $x$, $y$ and $z$, and an $({\cal F},{\cal F}_d)$-partition in the case where variable $x$ satisfies the clause.}\label{figKC}
\end{figure}
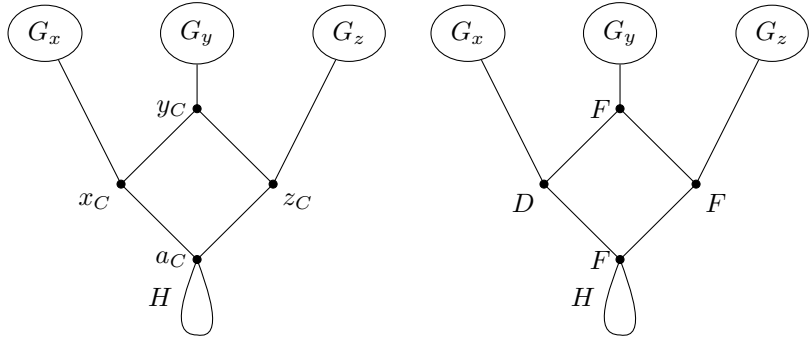

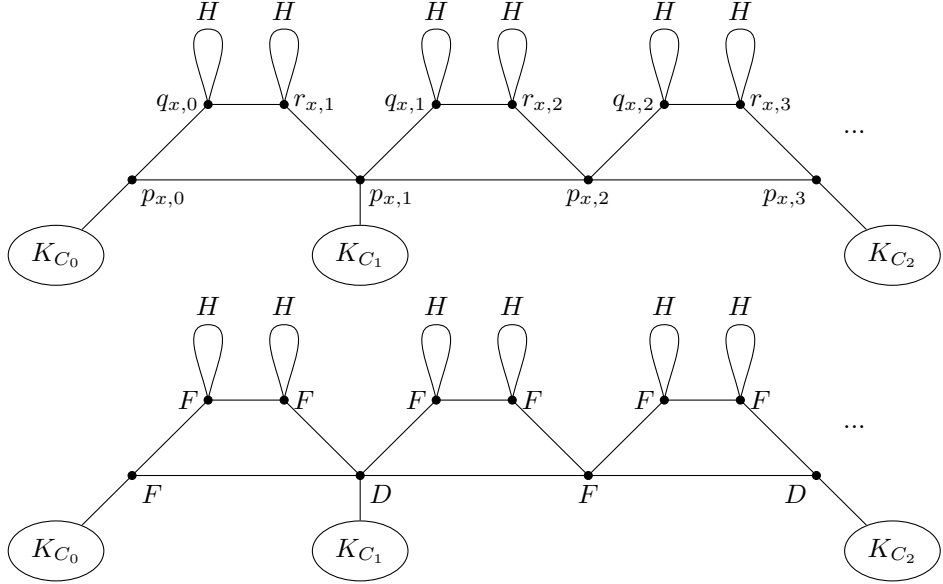
\begin{figure}[h!]
  \begin{center}
    \begin{tikzpicture}
      \coordinate (p0) at (0,0);
      \coordinate (p1) at (3,0);
      \coordinate (p2) at (6,0);
      \coordinate (p3) at (9,0);
      \coordinate (q0) at (1,1);
      \coordinate (r1) at (2,1);
      \coordinate (q1) at (4,1);
      \coordinate (r2) at (5,1);
      \coordinate (q2) at (7,1);
      \coordinate (r3) at (8,1);

      \node[draw,ellipse] (C0) at (-1,-1) {$K_{C_0}$};
      \node[draw,ellipse] (C1) at (3,-1) {$K_{C_1}$};
      \node[draw,ellipse] (C2) at (10,-1) {$K_{C_2}$};

      \draw (p0) node [below right] {$p_{x,0}$} ;
      \draw (p1) node [below right] {$p_{x,1}$} ;
      \draw (p2) node [below] {$p_{x,2}$} ;
      \draw (p3) node [below left] {$p_{x,3}$} ;

      \draw (q0) node [left] {$q_{x,0}$} ;
      \draw (q1) node [left] {$q_{x,1}$} ;
      \draw (q2) node [left] {$q_{x,2}$} ;
      \draw (r1) node [right] {$r_{x,1}$} ;
      \draw (r2) node [right] {$r_{x,2}$} ;
      \draw (r3) node [right] {$r_{x,3}$} ;
      \draw (1,2) node [above] {$H$} ;
      \draw (2,2) node [above] {$H$} ;
      \draw (4,2) node [above] {$H$} ;
      \draw (5,2) node [above] {$H$} ;
      \draw (7,2) node [above] {$H$} ;
      \draw (8,2) node [above] {$H$} ;
      \draw (9.5,0.5) node [above] {$...$} ;

      \draw (p0) -- (p1);
      \draw (p1) -- (p2);
      \draw (p2) -- (p3);
      \draw (p0) -- (q0);
      \draw (p1) -- (q1);
      \draw (p2) -- (q2);
      \draw (p1) -- (r1);
      \draw (p2) -- (r2);
      \draw (p3) -- (r3);
      \draw (q0) -- (r1);
      \draw (q1) -- (r2);
      \draw (q2) -- (r3);
      \draw [-,>=latex] (q0) to[out = 100, in = 180] ++(0,1) to[out = 0, in = 80] ++(0,-1);
      \draw [-,>=latex] (q1) to[out = 100, in = 180] ++(0,1) to[out = 0, in = 80] ++(0,-1);
      \draw [-,>=latex] (q2) to[out = 100, in = 180] ++(0,1) to[out = 0, in = 80] ++(0,-1);
      \draw [-,>=latex] (r1) to[out = 100, in = 180] ++(0,1) to[out = 0, in = 80] ++(0,-1);
      \draw [-,>=latex] (r2) to[out = 100, in = 180] ++(0,1) to[out = 0, in = 80] ++(0,-1);
      \draw [-,>=latex] (r3) to[out = 100, in = 180] ++(0,1) to[out = 0, in = 80] ++(0,-1);

      \draw (p0) -- (C0);
      \draw (p1) -- (C1);
      \draw (p3) -- (C2);

      \draw [fill=black](p0) circle (1.5pt) ;
      \draw [fill=black](p1) circle (1.5pt) ;
      \draw [fill=black](p2) circle (1.5pt) ;
      \draw [fill=black](p3) circle (1.5pt) ;
      \draw [fill=black](q0) circle (1.5pt) ;
      \draw [fill=black](q1) circle (1.5pt) ;
      \draw [fill=black](q2) circle (1.5pt) ;
      \draw [fill=black](r1) circle (1.5pt) ;
      \draw [fill=black](r2) circle (1.5pt) ;
      \draw [fill=black](r3) circle (1.5pt) ;

    \end{tikzpicture}

    \begin{tikzpicture}
      \coordinate (p0) at (0,0);
      \coordinate (p1) at (3,0);
      \coordinate (p2) at (6,0);
      \coordinate (p3) at (9,0);
      \coordinate (q0) at (1,1);
      \coordinate (r1) at (2,1);
      \coordinate (q1) at (4,1);
      \coordinate (r2) at (5,1);
      \coordinate (q2) at (7,1);
      \coordinate (r3) at (8,1);

      \node[draw,ellipse] (C0) at (-1,-1) {$K_{C_0}$};
      \node[draw,ellipse] (C1) at (3,-1) {$K_{C_1}$};
      \node[draw,ellipse] (C2) at (10,-1) {$K_{C_2}$};

      \draw (p0) node [below right] {$F$} ;
      \draw (p1) node [below right] {$D$} ;
      \draw (p2) node [below] {$F$} ;
      \draw (p3) node [below left] {$D$} ;

      \draw (q0) node [left] {$F$} ;
      \draw (q1) node [left] {$F$} ;
      \draw (q2) node [left] {$F$} ;
      \draw (r1) node [right] {$F$} ;
      \draw (r2) node [right] {$F$} ;
      \draw (r3) node [right] {$F$} ;
      \draw (1,2) node [above] {$H$} ;
      \draw (2,2) node [above] {$H$} ;
      \draw (4,2) node [above] {$H$} ;
      \draw (5,2) node [above] {$H$} ;
      \draw (7,2) node [above] {$H$} ;
      \draw (8,2) node [above] {$H$} ;
      \draw (9.5,0.5) node [above] {$...$} ;

      \draw (p0) -- (p1);
      \draw (p1) -- (p2);
      \draw (p2) -- (p3);
      \draw (p0) -- (q0);
      \draw (p1) -- (q1);
      \draw (p2) -- (q2);
      \draw (p1) -- (r1);
      \draw (p2) -- (r2);
      \draw (p3) -- (r3);
      \draw (q0) -- (r1);
      \draw (q1) -- (r2);
      \draw (q2) -- (r3);
      \draw [-,>=latex] (q0) to[out = 100, in = 180] ++(0,1) to[out = 0, in = 80] ++(0,-1);
      \draw [-,>=latex] (q1) to[out = 100, in = 180] ++(0,1) to[out = 0, in = 80] ++(0,-1);
      \draw [-,>=latex] (q2) to[out = 100, in = 180] ++(0,1) to[out = 0, in = 80] ++(0,-1);
      \draw [-,>=latex] (r1) to[out = 100, in = 180] ++(0,1) to[out = 0, in = 80] ++(0,-1);
      \draw [-,>=latex] (r2) to[out = 100, in = 180] ++(0,1) to[out = 0, in = 80] ++(0,-1);
      \draw [-,>=latex] (r3) to[out = 100, in = 180] ++(0,1) to[out = 0, in = 80] ++(0,-1);

      \draw (p0) -- (C0);
      \draw (p1) -- (C1);
      \draw (p3) -- (C2);

      \draw [fill=black](p0) circle (1.5pt) ;
      \draw [fill=black](p1) circle (1.5pt) ;
      \draw [fill=black](p2) circle (1.5pt) ;
      \draw [fill=black](p3) circle (1.5pt) ;
      \draw [fill=black](q0) circle (1.5pt) ;
      \draw [fill=black](q1) circle (1.5pt) ;
      \draw [fill=black](q2) circle (1.5pt) ;
      \draw [fill=black](r1) circle (1.5pt) ;
      \draw [fill=black](r2) circle (1.5pt) ;
      \draw [fill=black](r3) circle (1.5pt) ;

    \end{tikzpicture}
  \end{center}
  \caption{The gadget $G_x$ for a variable $x$, with an $({\cal F},{\cal F}_d)$-partition that corresponds to the assignation of $x$ to true. Here the literal associated to $x$ in $C_0$ is positive, and that associated to $x$ in $C_1$ and $C_2$ is negative.}\label{figGX}
\end{figure}

Suppose $I$ is satisfiable, and let us consider an assignation $\sigma$ of the variables that satisfies $I$. Let us make an $({\cal F},{\cal F}_0)$-partition of $G'_I$. We first take an $({\cal F},{\cal F}_0)$-partition for each copy of $H$. All the $a_C$, $q_{x,i}$ and $r_{x,i}$ are in $F$.
For each variable $x$, if $\sigma(x) = 1$, then we put all the $p_{x,2i}$ in $F$ and the $p_{x,2i+1}$ in $D$, else we put all the $p_{x,2i}$ in $D$ and the $p_{x,2i+1}$ in $F$. Then for each clause $C$, we choose a variable $x$ of $C$ that satisfies the clause (i.e. $x$ is true if the literal associated to $x$ in $C$ is a positive literal, and false otherwise), we put $x_C$ in $D$ and for the two other variables of $C$, we put the corresponding $y_C$ in $F$. 

All the vertices are in $F$ or in $D$. Let $v$ be a $G'_I$-vertex in $D$. If $v$ is in a copy of $H$, then it has no neighbour in $D$. If $v$ is a $x_C$, then the three other vertices of $K_C$ are in $F$. If $v$ is a $p_{x,i}$, then $p_{x,i+1}$ and $p_{x,i-1}$ are in $F$ if they exist, and all the $q_j$ and $r_j$ are in $F$. Suppose there are two $G'_I[F]$-neighbours in $D$. One is a $x_C$ and the other is a $p_{x,i}$ (with the same $x$). Then by construction the variable $x$ satisfies clause $C$ (i.e. $x$ is true if the literal associated to $x$ in $C$ is a positive literal, and false otherwise). If $x$ is associated to a positive literal in clause $C$, then $\sigma(x) = 1$ and $i$ is even, thus $p_{x,i}$ is in $F$, a contradiction. If $x$ is associated to a negative literal in clause $C$, then $\sigma(x) = 0$ and $i$ is odd, thus $p_{x,i}$ is in $F$, a contradiction. Graph $G'_I[F]$ has no cycle: there is no cycle in the copies of $H$ with every vertex in $F$; for each clause $C$, $K_C$ has a vertex in $D$, and for each $i \in [0,2k_x-2]$, $p_{x,2i}$ or $p_{x,2i+1}$ is in $D$. Therefore $(F,D)$ is an $({\cal F},{\cal F}_0)$-partition of $G'_I$.

Suppose now that there is an $({\cal F},{\cal F}_0)$-partition $(F,D)$ of $G'_I$. All the $a_C$, the $q_{x,i}$ and the $r_{x,i}$ are in $F$. For all variable $x$ and all $i \in [0,2k_x-2]$, either $p_{x,i} \in F$ and $p_{x,i+1} \in D$, or $p_{x,i} \in D$ and $p_{x,i+1} \in F$. Therefore for all $x$, either all the $p_{x,i}$ are in $F$ for $i$ even and in $D$ for $i$ odd, or all the $p_{x,i}$ are in $D$ for $i$ even and in $F$ for $i$ odd. Let $\sigma$ be the assignation of the variables $x$ such that $\sigma(x) = 1$ if $p_{x,0}$ is in $F$, and $\sigma(x) = 0$ otherwise. Let $C$ be a clause of $I$. At least one of the $x_C$ is in $D$ (otherwise $K_C$ is a cycle with every vertex in $F$), and it is adjacent to a $p_{x,i}$ with $i$ even if $x$ is positive and $i$ odd if $x$ is negative in $C$. This $p_{x,i}$ is in $F$, so if $x$ is positive in $C$, then $\sigma(x) = 1$, else $\sigma(x) = 0$. Therefore $\sigma$ satisfies clause $C$, and this is true for all $C$, so $\sigma$ satisfies $I$.

It is easy to see that the reduction is polynomial, and that $G'_I$ is a triangle-free planar graph. Thus this is a polynomial reduction from {\sc Planar $3$-sat} to $P_0$.

\subsection*{Second reduction: from $P_0$ to $P_d$ with $d < d_0$}
Consider an instance $I$ of $P_0$. For each vertex $v$ in $I$, add $d$ copies of $H'$, such that the corresponding copies of $y$ are adjacent to $v$. We call $I_d$ the resulting graph.

Suppose $I$ admits an $({\cal F},{\cal F}_0)$-partition. Consider an $({\cal F},{\cal F}_d)$-partition of $H'$. Apply it to every copy of $H'$ we made in $I_d$. Complete it with an $({\cal F},{\cal F}_0)$-partition of $I$. The obtained partition is an $({\cal F},{\cal F}_d)$-partition of $I_d$.

Suppose now that $I_d$ admits an $({\cal F},{\cal F}_d)$-partition $(F,D)$. In each copy of $H'$, we have $y \in D$, so each vertex in $I$ has exactly $d$ $(I_d-V(I))$-neighbours in $D$ and no $(I_d-V(I))$-neighbours in $F$. Therefore $(F \cap V(I), D \cap V(I))$ is an $({\cal F},{\cal F}_0)$-partition of $I$.

It is easy to see that the reduction is polynomial, and that $I_d$ is a triangle-free planar graph. Thus this is a polynomial reduction from $P_0$ to $P_d$.

\section{Acknowledgements}

We are deeply grateful to Pascal Ochem who pointed out that the PLANAR 3-SAT problem would be helpful to prove our complexity result.

Moreover, this research was partially supported by ANR EGOS project, under contract ANR-12-JS02-002-01.

\bibliographystyle{plain}
\bibliography{biblio} {}

\end{document}